\newtheorem{theorem}{Theorem}[section]
\newtheorem{lem}[theorem]{Lemma}
\newtheorem{obs}[theorem]{Observation}
\newtheorem{definition}[theorem]{Definition}
\newtheorem{claim}[theorem]{Claim}
\newtheorem{fact}[theorem]{Fact}
\newtheorem{remark}[theorem]{Remark}
\newcommand{\lex}{lexicographic }
\newcommand{\lexq}{{<}_{\mathsf{lex}}}
\newcommand{\p}{\mathsf {Prefix}}
\newcommand{\s}{\mathsf {Suffix}}
\newcommand{\st}{\mathsf {Substring}}
\newcommand{\np}{\mathsf{P}}
\newcommand{\ns}{\mathsf{S}}
\newcommand{\Z}{\mathbb{Z}}
\newcommand{\F}{\mathbb{F}}
\newcommand{\Rot}{\mathsf{Rot}}
\newcommand{\fp}{\mathsf{FP}}
\newcommand{\Orbit}{\mathsf{Orbit}}
\newcommand{\Classes}{\mathcal{C}}
\newcommand{\poly}{\mathsf{poly}}
\newcommand{\done}{{\times}}
\newcommand{\doing}{{\checkmark}}
\newcommand{\wt}{{\mathsf{wt}}}
\newcommand{\defeq}{\stackrel{\text{def}}{=}}
\newcommand{\bin}{\mathsf{Bin}}
\begin{document}

\title{Efficient Indexing of Necklaces and Irreducible Polynomials over Finite Fields}
\author{Swastik Kopparty\thanks{Department of Computer Science and Department of Mathematics, Rutgers University. Research supported in part by a Sloan Fellowship and NSF grant CCF-1253886.
Email: \texttt{swastik.kopparty@gmail.com}.}\and Mrinal Kumar\thanks{Department of Computer Science, Rutgers University. Research supported in part by NSF grant CCF-1253886.
Email: \texttt{mrinal.kumar@rutgers.edu}.}\and Michael Saks\thanks{Department of Mathematics, Rutgers University. Research supported in part by NSF grants CCF-0832787  and CCF-1218711. Email: \texttt{msaks30@gmail.com}}}

\maketitle

\begin{abstract}

We study the problem of {\em indexing} irreducible polynomials over finite fields,
and give the first efficient algorithm for this problem.
Specifically, we show the existence of $\poly(n, \log q)$-size
circuits that compute a bijection between $\{1, \ldots, |S|\}$ and the set
$S$ of all irreducible, monic, univariate polynomials of degree $n$ over a finite field $\F_q$.
This has applications in pseudorandomness, and answers
an open question of Alon, Goldreich, H{\aa}stad and Peralta~\cite{AGHP}.

Our approach uses a connection between irreducible polynomials and necklaces ( equivalence classes of strings under cyclic rotation).
Along the way, we give the first
efficient algorithm for indexing necklaces of a given length
over a given alphabet, which may be of independent interest.

\end{abstract}

\section{Introduction}

For a finite field $\F_q$ and an integer $n$,
let $S$ be the set of all irreducible polynomials in $1$ variable over $\F_q$
of degree exactly $n$. There is a well known formula for $|S|$
(which is approximately $\frac{q^n}{n}$).
We consider the problem of giving an efficiently computable {\em indexing} of  irreduducible polynomials
i.e., finding a bijection $f: \{1, \ldots, |S| \} \to S$ such that
$f(i)$  is computable in time $\poly(\log |S|) = \poly(n \log q)$.
Our main result is that indexing of irreducible polynomials can be done efficiently
given $O(n \log q)$ advice. This answers a problem posed by Alon, Goldreich, H{\aa}stad and Peralta~\cite{AGHP},
and is the polynomial analogue of the the well-known problem of ``giving a formula for the $n$-bit primes".
Note that today it is not even known (in general) how to produce a single irreducible polynomial of degree
$n$ in time $\poly(n \log q)$ without the aid of either advice or randomness.

The main technical result we show en route is an efficient indexing algorithm for {\em necklaces}.
Necklaces are equivalance classes of strings modulo cyclic rotation.
We give an $\poly(n \log |\Sigma|)$-time
computable bijection $g: \{1, 2, \ldots, |\mathcal N|\} \to \mathcal N$,
where $\mathcal N$ is the set of necklaces of length $n$ over the alphabet $\Sigma$.

\subsection{The indexing problem}

We define an {\em indexing} of a finite set $S$ to be a bijection from
the set $\{1,\ldots,|S|\}$ to $S$. 
Let us formalize indexing as a computational problem.   
Suppose that $L$ is an arbitrary
language over alphabet $\Sigma$ and let $L^n$ be the
set of strings of $L$ of length $n$.      We want to ``construct'' an indexing
function $A^n$ for each of the sets $L^n$.  Formally, this means
giving an algorithm $A$ which takes as input a size parameter $n$
and an index $j$ and outputs $A^n(j)$, so that the following
properties hold for each $n$:

\begin{itemize}
\item $A^n$ maps the set $\{1,\ldots,|L^n|\}$ bijectively to $L^n$.
\item If $j > |L^n|$ then $A^n(j)$ returns {\bf too large}.
\end{itemize}  

An indexing algorithm is considered to be efficient if its running time is $\poly(n)$.

A closely related problem is {\em reverse-indexing}. A
reverse-indexing of $L$ a bijection from $L^n$ to $\{1, \ldots, |L^n|\}$, and we say it is efficient if it can be computed in time $\poly(n)$.

We can use the above formalism for languages to
formulate the indexing and reverse-indexing problems for any combinatorial structure, such as permutations, graphs, partitions, etc. by using standard efficient encodings of such structures by strings.

\subsection{Indexing, enumeration, counting and ranking}

Indexing is closely related to the well-studied {\em counting}, {\em enumeration} and {\em ranking} problems for $L$.
The counting problem is to give an algorithm that, on input $n$ outputs
the size of $L^n$. The enumeration problem is to give an algorithm that, on input $n$, outputs a list containing all elements of $L^n$. A counting or enumeration algorithm is said to be efficient if it runs in time $\poly(n)$ or $|L^n| \cdot \poly(n)$ respectively.

Other important algorithmic problems associated with combinatorial objects
include the {\em ranking} and {\em unranking} problems.
For the ranking problem, one is given an ordering of $L^n$ (such as the
lexicographic order) and the goal is to compute the rank (under this order) of a
given element of $L^n$. For the unranking problem, one has to compute the inverse of this ranking
map.
It is easy to see that unranking algorithms for any ordering are automatically indexing algorithms,
and ranking algorithms for any ordering are automatically reverse-indexing
algorithms\footnote{We use the terms indexing and reverse-indexing instead of the terms unranking and ranking
to make an important distinction: in indexing and reverse-indexing the actual bijection
between $\{1, \ldots, |S| \}$ and $S$ is of no importance whatsoever, but in ranking and unranking
the actual bijection is part of the problem. We feel this difference is worth highlighting, and hence
we introduced the new terms indexing and reverse-indexing for this purpose. Note that some important prior
work on ranking/unranking distinguishes between these notions~\cite{MR-perm-rank}.}.

There is well developed complexity theory for counting problems, starting with the fundamental work of Valiant~\cite{Valiant}. For combinatorial structures, counting problems are  (of course) at the heart of combinatorics, and many basic identities in combinatorics  (such as recurrence relations that express the number
of structures of a particular size in terms of the number of such structures of smaller sizes) can also be viewed as giving efficient counting algorithms for
these structures.

The enumeration and ranking problems for combinatorial structures has also received a 
large amount of attention.
See the books~\cite{NW78, stinson1999combinatorial, ruskey2003combinatorial, arndt2011matters}
for an overview of some of the work on this topic.


Counting and enumeration can be easily reduced to indexing:  Given an indexing algorithm
$A$ we can compute $|L^n|$ by calling $A^n(j)$ on increasing powers of
2 until we get the answer `{\bf too large}' and then do binary search to
determine the largest $j$ for which $A^n(j)$ is not too large. Enumeration
can be done by just running the indexing algorithm on the integers
$1, 2, \ldots$ until we get the answer {\bf too large}.

Conversely, in many cases, such as for subsets, permutations, set partitions, integer partitions, trees, spanning trees, (and many many more) the known
counting algorithms can be modified to give
efficient indexing (and hence enumeration) algorithms.  This happens, for example, when the counting problem is solved by a recurrence relation that is proved via a bijective proof.

However, it seems that not all combinatorial counting arguments lead to efficient indexing algorithms.
A prime example of this situation is when we have a finite group acting on a finite set, and the
set we want to count is the set of orbits of the action.
The associated counting problem can be solved
using the Burnside counting lemma, and there seems to be no general way to
use this to get an efficient indexing algorithm.

This leads us to one of the indexing problems studied here: Fix an alphabet $\Sigma$
and consider two strings $x$ and $y$ in  $\Sigma^n$ to be equivalent
if one is a rotation of the other, i.e. we can find strings $x^1,x^2$
such that $x=x^1 x^2$ and $y=x^2x^1$ (here $uv$ denotes
the concatenation of the strings $u$ and $v$). 
The equivalence classes of strings are precisely the orbits
under the natural action of the cyclic group $\Z_n$ on $\Sigma^n$.
These equivalence classes are often called {\em necklaces} because
if we view the symbols of a string as arranged in a circle, then
equivalent strings give rise to the same arrangement. We are interested
in the problem of efficiently indexing necklaces. We apply the indexing algorithm for necklaces to the problem of indexing irreducible polynomials over a finite field.

\subsection{Main results}

Our main result is an efficient algorithm for indexing
irreducible polynomials.

\begin{theorem}
Let $q$ be a prime power, and let $n\geq 1$ be an integer.
Let $I_{q,n}$ be the set of monic irreducible polynomials of 
degree $n$ over $\F_q$.

There is an indexing algorithm for $I_{q,n}$,
which takes $O(n \log q)$ bits of advice
and runs in $\poly(n, \log q)$ time.
\end{theorem}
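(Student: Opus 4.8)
The plan is to reduce the problem of indexing $I_{q,n}$ to the problem of indexing \emph{aperiodic} necklaces of length $n$ over the alphabet $\F_q$ (a necklace being aperiodic when its minimal period is $n$, that is, when no nontrivial cyclic rotation fixes it), and then to invoke the necklace indexing algorithm. The bridge between the two problems is the classical description of degree-$n$ irreducible polynomials via the action of the Frobenius map $\sigma \colon x \mapsto x^q$ on $\F_{q^n}$, together with the existence of a normal basis of $\F_{q^n}$ over $\F_q$.

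In detail: a monic irreducible $f$ of degree $n$ over $\F_q$ is the minimal polynomial of any of its roots $\alpha \in \F_{q^n}$, and its set of roots is the $\sigma$-orbit $\{\alpha, \alpha^q, \ldots, \alpha^{q^{n-1}}\}$; conversely the minimal polynomial of any $\alpha \in \F_{q^n}$ with $\F_q(\alpha) = \F_{q^n}$ is such an $f$. Thus $I_{q,n}$ is in bijection with the set of $\sigma$-orbits in $\F_{q^n}$ of size exactly $n$. Now fix a normal element $\theta$, so that $\theta, \sigma(\theta), \ldots, \sigma^{n-1}(\theta)$ is an $\F_q$-basis of $\F_{q^n}$; expressing elements of $\F_{q^n}$ in this basis identifies $\F_{q^n}$ with $\F_q^n$ as $\F_q$-vector spaces, and under this identification $\sigma$ becomes the cyclic shift of coordinates. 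Hence $\sigma$-orbits correspond to necklaces of length $n$ over $\F_q$, and orbits of size exactly $n$ correspond to aperiodic necklaces. Concretely this yields a bijection $\Psi$ from aperiodic necklaces of length $n$ to $I_{q,n}$: given a representative string $(a_0, \ldots, a_{n-1})$ of an aperiodic necklace, set $\alpha := \sum_{i=0}^{n-1} a_i\, \sigma^{i}(\theta) \in \F_{q^n}$ and output $\Psi := \prod_{i=0}^{n-1}\bigl(X - \sigma^i(\alpha)\bigr)$. This is independent of the chosen representative (a cyclic shift of the string replaces $\alpha$ by $\sigma(\alpha)$, which has the same minimal polynomial), it is a bijection by the above discussion, and it can be evaluated in $\poly(n, \log q)$ time using arithmetic and iterated Frobenius in $\F_{q^n}$.

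It remains to turn this into an algorithm, which is where the advice enters. The $O(n\log q)$ advice bits consist of (i) a monic irreducible polynomial $g \in \F_q[X]$ of degree $n$, giving the model $\F_{q^n} = \F_q[X]/(g)$ in which all of the above field arithmetic is carried out, and (ii) a normal element $\theta$, written as an element of $\F_q[X]/(g)$. Each is $n\log q$ bits, and both are verifiable in $\poly(n,\log q)$ time: irreducibility of $g$ by a standard deterministic test (note that here we only need to \emph{test} the advised polynomial, not produce one), and normality of $\theta$ by checking that the $n\times n$ matrix over $\F_q$ whose columns are $\theta, \sigma(\theta), \ldots, \sigma^{n-1}(\theta)$ is invertible. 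We then compose $\Psi$ with an efficient indexing of the aperiodic necklaces of length $n$ over $\F_q$, which the necklace indexing algorithm provides (necklaces of length $n$ break up into aperiodic necklaces of lengths dividing $n$, and the indexing construction handles these aperiodic pieces directly). Finally, the ``too large'' threshold $|I_{q,n}|$ equals the number of aperiodic necklaces of length $n$ over an alphabet of size $q$, which is computed in $\poly(n,\log q)$ time from the classical formula $|I_{q,n}| = \tfrac1n\sum_{d\mid n}\mu(d)\,q^{n/d}$. Composing these pieces yields the claimed $\poly(n,\log q)$-time indexing algorithm for $I_{q,n}$ using $O(n\log q)$ bits of advice.

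The main obstacle is precisely the reduction to indexing aperiodic necklaces: counting $I_{q,n}$ (equivalently, counting orbits of the Frobenius/cyclic action) is a routine Burnside-type computation, but as discussed in the introduction such orbit-counting arguments do not by themselves yield indexing algorithms, and surmounting this is exactly the content of the necklace indexing result invoked above. By comparison, the remaining ingredients — the normal-basis dictionary between Frobenius orbits and necklaces, the verification of the advice, and the evaluation of $\Psi$ — are routine applications of standard finite-field algorithmics, although one must take some care to confirm that $\Psi$ is a genuine bijection (rather than merely an injection or a surjection) and that the supplied advice is valid.
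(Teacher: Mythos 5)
Your proposal is correct, and it proves the theorem via a genuinely different dictionary than the paper's. Both arguments share the same skeleton: identify $I_{q,n}$ with the size-$n$ orbits of the Frobenius action on $\F_{q^n}$, transport that action to the rotation action on $q$-ary strings of length $n$, and invoke the indexing algorithm for necklaces of fundamental period exactly $n$ (which Theorem~\ref{thm:largesigma} supplies). The difference is the transport step. You use the additive structure: with a normal element $\theta$ as advice, the $\F_q$-linear bijection $(a_0,\ldots,a_{n-1}) \mapsto \sum_i a_i\theta^{q^i}$ conjugates Frobenius to the cyclic shift of coordinates. The paper uses the multiplicative structure: its advice is a primitive polynomial, i.e.\ an irreducible $F$ of degree $n$ whose root $g$ generates $(\F_{q^n})^*$; writing elements as $g^a$, Frobenius becomes $a \mapsto qa \bmod (q^n-1)$, which is rotation of the base-$q$ digits of $a$ (with the all-$(q-1)$ string excluded). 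Your route buys two things: the advice (an irreducible $g$ plus a normal $\theta$) is deterministically verifiable in $\poly(n,\log q)$ time, whereas primitivity of the paper's advised polynomial is not known to be checkable without factoring $q^n-1$; and the zero element (and the $n=1$ case) is handled without excluding a special string. The paper's route makes the index-to-polynomial map a single exponentiation $g^a$ and combines, for constant $q$, with Shoup's generator search to shrink the advice to $\poly(\log n)$ bits -- a refinement that with your encoding would instead require producing a normal element with little advice. Neither encoding yields efficient reverse-indexing of $I_{q,n}$: the paper's is blocked by discrete logarithm, yours by deterministic root-finding of the given $f$ inside the model $\F_q[T]/(g)$ -- but the theorem does not ask for this. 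One small point: as in the paper, the advice should also include a description of $\F_q$ itself (an irreducible polynomial over the prime field when $q$ is not prime), costing only $O(\log q)$ additional bits; and, as you note, the validity of the advice need only be assumed, not verified, for the theorem as stated.
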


We remark that it is not known today how to deterministically
produce (without the aid of advice or randomness) even a single irreducible polynomial of degree $n$
in time $\poly(n \log q)$ for all choices of $n$ and $q$. Our result
shows that once we take a little bit of advice, we can produce not
just one, but all irreducible polynomials.
For constant $q$, where it is known how to deterministically construct a single
irreducible polynomial in $\poly(n)$ time without advice~\cite{Shoupirred},
our indexing algorithm can be made to run with just
$\poly(\log n)$ bits of advice.

Using a known correspondence~\cite{Golumb}
between necklaces and irreducible polynomials
over finite fields,
indexing irreducible polynomials reduces to the problem of indexing necklaces.
Our main technical result (of independent interest)
is an efficient algorithm for this latter problem.


\begin{theorem}~\label{thm:indexing}
There is an algorithm for indexing necklaces of length $n$ over the alphabet $\{1,\ldots,q\}$,
which runs in time $\poly(n\log q)$.
\end{theorem}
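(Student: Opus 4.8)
The plan is to reduce the indexing of necklaces to the indexing of a more structured family of objects, namely \emph{Lyndon words} (the lexicographically least representatives of \emph{aperiodic} necklaces), and then to an indexing problem for aperiodic strings. The first step is a classical structural fact: every necklace of length $n$ over $\Sigma$ has a unique ``primitive period'' $d \mid n$, so that the necklace is the $n/d$-fold repetition of an aperiodic necklace of length $d$. Hence $|\mathcal N_n| = \sum_{d \mid n} |\mathcal L_d|$, where $\mathcal L_d$ is the set of aperiodic necklaces (equivalently, Lyndon words) of length $d$. Given an index $j \in \{1,\dots,|\mathcal N_n|\}$, I would first figure out, by walking through the divisors $d$ of $n$ in some fixed order and subtracting $|\mathcal L_d|$ (computed via M\"obius inversion, $|\mathcal L_d| = \frac1d \sum_{e \mid d} \mu(d/e) q^e$, which is $\poly(n,\log q)$-time computable), which divisor block $j$ falls into; this reduces the problem to indexing $\mathcal L_d$ for a single $d$, and it suffices to solve that.

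To index $\mathcal L_d$ I would pass to indexing \emph{aperiodic strings}: the cyclic group $\Z_d$ acts freely on the set $A_d \subseteq \Sigma^d$ of aperiodic strings (freeness is exactly aperiodicity), so $|A_d| = d \cdot |\mathcal L_d|$ and a necklace corresponds to an orbit of size exactly $d$. Thus an indexing of $\mathcal L_d$ can be obtained from an indexing of $A_d$ composed with the canonical map ``string $\mapsto$ its Lyndon representative,'' provided we have a way to index $A_d$ that respects orbits, or — more simply — it suffices to index $A_d$ and then, to recover a necklace-indexing, either (i) quotient by $d$ after sorting each orbit, or (ii) directly index the Lyndon words as the minimal orbit elements. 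The cleanest route is: build an efficient \emph{reverse-indexing} (ranking) of $A_d$ under lexicographic order, i.e.\ given $x \in A_d$ compute $|\{y \in A_d : y \lexq x\}|$; then given $j$ for $\mathcal L_d$, binary-search over $x \in \Sigma^d$ for the $j$-th Lyndon word, using the ranking oracle plus an $O(d)$ test for ``is $x$ a Lyndon word.'' So the core task becomes: rank aperiodic strings of length $d$ in $\poly(d,\log q)$ time.

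For ranking $A_d$, the key is inclusion–exclusion over periods combined with the standard ability to rank \emph{all} strings and to rank strings with a fixed prefix. Write $P_e$ for the set of $x \in \Sigma^d$ whose minimal period divides $e$ (equivalently $x = w^{d/e}$ for some $w \in \Sigma^e$); then $A_d = \Sigma^d \setminus \bigcup_{e \mid d,\, e < d} P_e$, and by inclusion–exclusion the rank of $x$ in $A_d$ equals $\sum_{\text{gcd-closed families}} (\pm)\,(\text{rank of }x\text{ among strings of a given periodicity})$. Ranking among strings that are $(d/e)$-fold repetitions of an $e$-block reduces, via the obvious bijection with $\Sigma^e$, to ranking in $\Sigma^e$ restricted by the constraint that the block is the length-$e$ prefix-pattern of $x$ — all of which is elementary mixed-radix arithmetic computable in $\poly(d,\log q)$ time. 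Assembling these with M\"obius-type signs over the divisor lattice of $d$ gives $|\{y \in A_d : y \lexq x\}|$ exactly.

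The main obstacle I anticipate is \emph{not} the counting identities — those are routine M\"obius inversion — but rather making the ``$j$-th Lyndon word'' search genuinely efficient: a naive binary search over $\Sigma^d$ costs $O(d \log q)$ rounds, which is fine, but one must be careful that at each round the ranking-of-aperiodic-strings oracle is invoked on a \emph{prefix-restricted} count, i.e.\ we really need ``number of Lyndon words (or aperiodic strings) with a given prefix,'' and the interaction between the lexicographic prefix constraint and the periodicity constraints in the inclusion–exclusion must be handled carefully (a fixed prefix of length $\ell$ partially determines, and sometimes over-determines, a candidate period $e$). I expect the bulk of the technical work to be in verifying that these prefix-restricted periodicity counts are still expressible by a clean closed form and computed in $\poly(d,\log q)$ time, and in confirming that the resulting bijection is well-defined (total, injective, surjective) and returns \textbf{too large} appropriately.
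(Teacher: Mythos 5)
Your overall scaffolding (split necklaces by fundamental period, reduce via M\"obius inversion to Lyndon words of length $d$, then binary-search for the $j$-th Lyndon word using a ranking oracle) is a reasonable reduction, and it broadly parallels the paper's reductions. But there is a genuine gap at the heart of your plan: the oracle you actually construct is not the oracle you need. For the binary search to find the $j$-th Lyndon word you need a monotone count of the form ``number of Lyndon words $\lexq x$'', equivalently the number of aperiodic orbits whose \emph{minimal} element is $\lexq x$, which is the same as the number of aperiodic orbits containing \emph{at least one} rotation $\lexq x$. What your inclusion--exclusion over periods with mixed-radix arithmetic computes is the number of aperiodic \emph{strings} $y$ with $y \lexq x$ --- a different quantity that is not related to the Lyndon-word rank by any division by $d$ or simple correction, because orbits straddling the threshold $x$ contribute a number of elements below $x$ that varies from orbit to orbit. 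A concrete illustration: over $\{0,1\}$ with $d\ge 2$, there are no Lyndon words beginning with $1$, yet roughly half of all aperiodic strings do, so ``aperiodic strings with a given prefix'' (easy, as you say) tells you essentially nothing about ``Lyndon words with a given prefix'' (hard). Your closing paragraph gestures at this (``number of Lyndon words (or aperiodic strings) with a given prefix'') as if the two were interchangeable; they are not, and the parenthetical hides exactly the open problem the theorem resolves.

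The missing ingredient is a polynomial-time method to count strings $y$ such that \emph{some cyclic rotation} of $y$ is $\lexq x$ (refined by orbit size). This is where the paper does its real work: it defines the witness set $L_x$ of prefixes of $x$ with the last symbol decreased, observes that some rotation of $y$ is below $x$ iff a witness occurs in $y$ as a contiguous substring or wrapped around the end, builds a $\poly(n)$-size read-once branching program recognizing this cyclic-occurrence condition, and counts accepting paths by dynamic programming; combining these counts with the identity $|\Classes_x| = \sum_{y} 1/|\Orbit(y)|$ over the accepted strings and M\"obius inversion over divisors yields the number of necklaces below $x$, after which binary search gives indexing (and reverse indexing). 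Nothing in your proposal supplies an analogue of this cyclic-occurrence counting step, and the ``routine M\"obius inversion plus prefix arithmetic'' you rely on cannot, by itself, see the rotations of $y$; so as written the proof does not go through.
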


Our methods also give an efficient reverse-indexing algorithm for necklaces (but unfortunately
this does not lead to an efficient reverse-indexing algorithm for irreducible polynomials; this has to do with the the open problem of efficiently computing the discrete logarithm).

\begin{theorem}~\label{thm:reverse_indexing}
There is an algorithm for reverse-indexing necklaces of length $n$ over the alphabet $\{1,\ldots,q\}$,
which runs in time $\poly(n\log q)$.
\end{theorem}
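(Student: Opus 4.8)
We describe a reverse-indexing map explicitly and then argue that it is efficiently computable. Order the necklaces of length $n$ first by their minimal period $p$ (necessarily a divisor of $n$) and, within a fixed period $p$, by the lexicographic order of the \emph{Lyndon root}: the unique aperiodic string $w$ of length $p$ whose $(n/p)$-th power represents the necklace. Assign to each necklace its position in this order. (With a little extra care one can instead make the reverse-indexing map the inverse of the indexing bijection of Theorem~\ref{thm:indexing}, but the statement only requires some efficient bijection.) Given any representative of the input necklace, Booth's algorithm outputs in $\poly(n\log q)$ time its lexicographically least rotation $x$, from which the minimal period $p$ and the Lyndon root $w$ with $x = w^{n/p}$ are extracted by a Duval/Lyndon factorization. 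The number of necklaces of strictly smaller period is $\sum_{p' \mid n,\, p' < p} \frac{1}{p'}\sum_{d \mid p'}\mu(p'/d)\,q^{d}$, computable in $\poly(n\log q)$ time (divisor lists and M\"obius values are), so the task reduces to ranking $w$ among the length-$p$ Lyndon words over $\{1,\ldots,q\}$ in lexicographic order.

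To rank $w$ we scan the positions $1,\ldots,p$: at position $i$, for each symbol $c < w_i$ for which $w_1\cdots w_{i-1}c$ is still a prenecklace (a prefix of some necklace representative; by the Fredricksen--Kessler--Maiorana/Duval extension rule this holds exactly when $c$ is at least the periodic-continuation symbol of $w_1\cdots w_{i-1}$), we add the number of length-$p$ Lyndon words beginning with $w_1\cdots w_{i-1}c$, and at the end we add $1$ for $w$ itself. So everything rests on a single subroutine: given a prenecklace $\alpha$, count the length-$p$ Lyndon words having $\alpha$ as a prefix. A useful, directly computable ingredient towards this is $a_m(u)$, the number of aperiodic strings of length $m$ that are lexicographically at most a given string $u$: writing $R_m(u)$ for the trivial rank of $u$ in $\{1,\ldots,q\}^m$ and using that a non-aperiodic string of length $m$ is $z^{m/p'}$ for a unique proper divisor $p' \mid m$ and aperiodic $z$ of length $p'$, one gets $a_m(u) = R_m(u) - \sum_{p' \mid m,\, p' < m}\big(a_{p'}(u_1\cdots u_{p'}) + \varepsilon_{p'}\big)$ with boundary corrections $\varepsilon_{p'} \in \{-1,0\}$ that are read off in $\poly(n\log q)$ time; only the values $a_{m'}(u_1\cdots u_{m'})$ for divisors $m' \mid m$ ever arise, so this recursion is filled bottom-up over divisors efficiently, and the same divisor/M\"obius bookkeeping relates counts of Lyndon words and of necklace representatives.

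The step I expect to be the real obstacle is the passage from aperiodic-string counts to the Lyndon-word-with-given-prefix counts, i.e.\ correctly accounting for the cyclic-rotation quotient. The $p$ rotations of a length-$p$ Lyndon word form a single free $\Z_p$-orbit of aperiodic strings, but in lexicographic order only some of them lie below a given threshold, so one cannot simply divide by $p$; what is needed is control over how the $\Z_p$-action interacts with lexicographic comparison --- concretely, the ability to count the length-$p$ strings whose lexicographically least rotation begins with a prescribed prefix. This is precisely the counting engine that the indexing algorithm of Theorem~\ref{thm:indexing} must contain (to output the $j$-th necklace it must already be able to localize $j$ among such counts), and it is what makes the subroutine above run in $\poly(n\log q)$ time. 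Granting that, reverse-indexing is just the bookkeeping above, and the periodic necklaces --- those whose representative is a proper power of a shorter Lyndon word --- are slotted in consistently by the divisor recursion.
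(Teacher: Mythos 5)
There is a genuine gap, and you point at it yourself: the entire argument funnels into the subroutine ``given a prenecklace $\alpha$, count the length-$p$ Lyndon words having $\alpha$ as a prefix,'' and you do not supply it. Your $a_m(u)$ recursion only counts \emph{aperiodic strings} lexicographically at most $u$, which does not resolve the interaction between the rotation action and lexicographic order (an aperiodic string below the threshold need not be a Lyndon word, and its minimal rotation can land on either side of the threshold), and you concede that one cannot just divide by $p$. At that point you appeal to ``the counting engine that the indexing algorithm of Theorem~\ref{thm:indexing} must contain.'' That appeal is not sound as stated: the theorem, as a black box, only promises \emph{some} efficiently computable bijection from $\{1,\ldots,t\}$ to necklaces, and it neither exposes a threshold-counting oracle nor can it be inverted with polynomially many black-box calls when the bijection is arbitrary (which is exactly why reverse-indexing does not trivially reduce to indexing). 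So the one step that carries all the difficulty is asserted rather than proved, and it is precisely the step that was the open problem about ranking Lyndon words that this paper addresses.

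For comparison, the paper closes this gap by reducing both indexing and reverse-indexing (Lemma~\ref{lem:reduction}) to computing $|\Classes_x|$, the number of orbits containing a string lexicographically below $x$; reverse-indexing is then just: rotate the input to its minimal representative $y$ and output $|\Classes_y|+1$. The counting itself is done by splitting over orbit sizes with M\"obius inversion (Lemmas~\ref{lem: 2}--\ref{lem:4}) and, for each divisor $p$, building polynomial-size read-once branching programs ($B_x^c$, $B_x^w$, and their product) that accept exactly the strings having a witness from $L_x$ as a contiguous or wrap-around substring, whose accepted strings are counted by matrix powering (Lemma~\ref{lem:countautomaton}); the large-alphabet case is handled by binary encoding with rotation amounts constrained to multiples of $t=\lceil\log q\rceil$ (Lemma~\ref{lem:largesigma}). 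Your surrounding bookkeeping (Booth's algorithm, period stratification, prefix-by-prefix ranking with the FKM extension rule, divisor/M\"obius counts of necklaces by period) is fine and would yield a valid reverse-indexing once a subroutine counting aperiodic necklaces below a lexicographic threshold is in hand --- but that subroutine has to be constructed, e.g.\ by the branching-program machinery above, not inferred from the statement of Theorem~\ref{thm:indexing}.
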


The indexing algorithm for irreducible polynomials can be used 
to make a classical $\epsilon$-biased set construction
from~\cite{AGHP} based on linear-feedback shift register sequences
constructible with logarithmic advice (to put it at par with the
other constructions in that paper). It can also be used
to make the explicit subspace designs of~\cite{GK-subs-design}
very explicit (with small advice).

Agrawal and Biswas~\cite{AB03} gave a construction of a family of nearly-coprime 
polynomials, and used this to give randomness-efficient black-box
polynomial identity tests. The ability to efficiently
index irreducible polynomials enables one to do this even more
randomness efficiently (using a small amount of advice). 

Similarly, the string fingerprinting algorithm by Rabin~\cite{Rabin81}, which is based on choosing a random
irreducible polynomial can be made more randomness efficient by choosing the random irreducible polynomial
via first choosing a random index and then indexing the corresponding irreducible polynomial using our indexing algorithm.
This application also requires a small amount of advice.  

As another application of the indexing algorithm for necklaces,
we give a $\poly(n)$ time algorithm for computing any given entry of the 
$k \times 2^n$ generator matrix matrix or the $(2^n-k) \times 2^n$
parity check matrix of BCH codes for all values of the
designed distance (this is the standard
notion of strong explicitness for error-correcting codes).
Earlier, it was only known how to compute this entry explicitly
for very small values of the designed distance (which is usually
the setting where BCH codes are used).

\subsection{Related Work}

There is an  extensive literature on enumeration
algorithms for combinatorial objects (see the books
\cite{ruskey2003combinatorial,knuth-4,stinson1999combinatorial,
NW78,arndt2011matters}). Some of these
references discuss necklaces in depth, and some also
discuss the ranking/unranking problems for various combinatorial
objects.

The lexicographically smallest element of a rotation class
is called a {\em Lyndon word}, and much is known about them.
Algorithmically, the problem of enumerating/indexing necklaces
is essentially equivalent to the problem of enumerating/indexing 
Lyndon words. Following a long line of work
\cite{neckgen1,neckgen2,neckgen3,neckgen4,neckgen5,RS-necklace,CRSSM00}, we now know linear time enumeration algorithms for Lyndon words/necklaces.

In~\cite{martinez2004efficient} and~\cite{ruskey2003combinatorial}, it was noted that
the problem of efficient ranking/unranking of the lexicographic order
on Lyndon words is an open problem.
Our indexing algorithms in fact give a solution to this problem too: 
we get an efficient ranking/unranking algorithm for the
lexicographic order on Lyndon words.

Recent work of Andoni, Goldberger, McGregor and Porat~\cite{DBLP:conf/stoc/AndoniGMP13} studied a problem that may be viewed as an approximate version of reverse indexing of necklaces. They gave a randomized algorithm for producing short fingerprints of strings, such that the fingerprints of rotations of a string are determined by the fingerprint of the string itself. This fingerprinting itself was useful for detecting proximity of strings under misalignment.

\paragraph{Recent independent work :}
A preliminary version of this paper appeared as~\cite{KKS14}.  At about
the same time, similar results were published by Kociumaka, Radoszewski and  Rytter \cite{KRR14}.  The work in these two papers was done independently.  The papers both have polynomial
time algorithms for indexing necklaces; the authors in~\cite{KRR14} exercised
more care in designing the algorithm to obtain a better polynomial running time.  Their approach to alphabets of size more than 2 is  cleaner than ours. On the other hand, we put the results in a broader context and have some additional applications (indexing irreducible polynomials
and explicit constructions).

\subsection{Organization of the paper}
 The rest of the paper is organized as follows. We give the algorithm to index
necklaces in Section~\ref{sec:countequiv}. In Section~\ref{sec:irrind}, we use our indexing algorithm for necklaces to give an indexing algorithm for irreducible polynomials over finite fields. In Section~\ref{sec:BCH}, we give an application to the explicit construction of generator and parity check matrices of BCH codes. We conclude with some open problems in Section~\ref{sec:openprobs}.
In Appendix~\ref{sec:magic}, we give an alternate algorithm for indexing binary necklaces of prime length. In Appendix~\ref{sec:complexity}, we give some prelimary observations about the complexity theory of indexing in general.

\section{Indexing necklaces}\label{sec:countequiv}

\subsection{Strategy for the algorithm}~\label{sec:strategy}

We first consider a very basic indexing algorithm which will inspire our algorithms.
Given a directed acyclic graph $D$ on vertex set $V$ and  distinguished subsets $S$ and $T$ of nodes,
there is a straightforward indexing algorithm for the set of of paths that start in $S$ and end in $T$:  
Fix an arbitrary
ordering on the nodes, and consider the induced lexicographic ordering on paths 
(i.e. path $P_1P_2\ldots$ is less than path $Q_1Q_2\ldots$ if  $P_i<Q_i$ where $i$ is the least integer
such that $P_i \neq Q_i$).  Our indexing function
will map the index $j$ to the $j$th path from $S$ to $T$ in lexicographic order.  There is a simple
dynamic program which computes for each node $v$, the number $N(v)$ of paths from $v$ to a vertex in $T$.
Let $v_1,\ldots,v_r$ be the nodes of $S$ listed in order.
Given  the input index $j$, we find the first source $v_i$ such that the number of paths to $T$
starting at nodes $v_1,\ldots,v_i$ is at least $j$; if there is no such source then
the index $j$ is larger than the number of paths being indexed.  Otherwise,   $v_i$ is the first node
of the desired path, and we can proceed inductively by replacing the set $S$ by the set of
children of $v_i$.

This approach can be adapted to the following situation.  Suppose the
set $S$ we want to index is a set of strings of fixed length $n$ over alphabet $\Sigma$.
A {\em read-once branching program} of length $n$ over alphabet $\Sigma$ is an acyclic
directed graph with vertex layers numbered from 0 to $n$, where (1)  layer 0 has
a single start node, (2) there is a designated subset of accepting nodes at level $n$,
and (3) every non-sink node has one outgoing arc corresponding
to each alphabet symbol, and these arcs connect the node
to  nodes at the next level.   For nodes $v$ and $w$ and alphabet
symbol $\sigma$ we write $v\rightarrow_{\sigma} w$ to mean that there is an arc
from $v$ to $w$ labelled by $\sigma$.

 Such a branching program
takes words from $\Sigma^n$ and, starting from the start node,
follows the path corresponding to the word to either the accept or reject node.
Given a read-once branching program for $S$, there is a 1-1 correspondence
between strings in $S$ and paths from the start node to an accepting node.
We can use the indexing algorithm for paths given above to index $S$.

This suggests the following approach to indexing necklaces.  For each equivalence class of strings (necklace)
identify a canonical representative string of the class (such as the lexicographically smallest
representative).   Then build a branching program $B$ which, given string $y$, determines
whether $y$ is a canonical representative of its class.
By the preceding paragraph, this would be enough to index all of the canonical
representatives, which is equivalent to indexing equivalence classes.

In fact, we are able to implement this approach provided that $q=2$ 
and $n$ is prime (See appendix \ref{sec:magic}).
However, we have not been able to make it work in general.  For this we need another approach,
which still uses branching programs, but in a more involved way.

First some notation.
For a given string $y$, we write the string obtained from $y$ after cyclically rotating it rightwards  by $i$ positions as $\Rot^i(y)$. We define 
$\Orbit(y)$ to be the set containing $y$ and all its distinct rotations. $\Orbit(y)$ will also be referred to as the {\it equivalence class} of $y$. A string $y$ is said to be periodic with period $p$ if it can be written as ${y_1}^{q}$ for some ${y_1}\in \Sigma^{p}$ and $q = \frac{n}{p}$. A string is said to have fundamental period $p$ if it is periodic with period $p$ and not periodic with any period smaller than $p$. We will denote the fundamental period of a string $y$ by $\fp(y)$. Note that for any string $y$,  $|\Orbit(y)| = \fp(y)$.

If $E$ is an orbit and $x$ is a string, we say that $E<x$ if
$E$ contains at least one string $y$ that is lexicographically less than $x$.
(Notice that under our definition, if $x$ and $y$ are strings then
we might have both that the orbit of $x$ is less than $y$ and the orbit of $y$ is less than $x$).

Let $t$ be the total number of orbits.
Let $\Classes_x$ be the set of orbits that are less than $x$.
Our main goal will be to design an efficient algorithm which, given string $x$, returns
$|\Classes_x|$.   We now show that if we can do this then we can solve
both the indexing and reverse indexing problems.

For the indexing problem, we want a 1-1 function $\psi$ that maps 
$j \in\{1,\ldots,t\}$ to a string so that all of the image strings are in different orbits.
The map $\psi$ will be easily computabile given a subroutine for $|\Classes_x|$.

Define the {\em minimal representative} of an orbit to be the lexicographically least string in the orbit.
Let $y^1< \cdots < y^t$ denote the minimal representatives in lex order.   Our map $\psi$ will map $j$ to $y^j$.
This clearly maps each index to a representative of a different orbit.

It suffices to show how to compute $\psi(j)$.  
Note that $|\Classes_x|$ is equal to the number of $y^i$ that precede $x$, and is thus a nondecreasing function of $x$.
Therefore, $\psi(j)=y^j$ is equal to the
lexicographically largest string with $|\Classes_x| <j$.    Furthermore, since $|\Classes_x|$
is a nondecreasing function of $x$,  we can find $\psi(j)$ by doing binary search on the set of strings according
to the value of $|\Classes_x|$.

Simiarly, we can solve the reverse indexing problem: given a string $x$ we can find the index of the orbit to which it belongs
by first finding the lexicographically minimal representative $y^i$  of its orbit  and then computing $|\Classes_{y^i}|+1$.

\begin{lem}~\label{lem:reduction}
To efficiently index and reverse index necklaces of length $n$ over an alphabet $\Sigma$, it suffices to have an efficient algorithm that takes as input a string $x \in \Sigma^n$ and outputs $|\Classes_x|$. 
\end{lem}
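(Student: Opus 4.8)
The plan is to take the reduction sketched just above the lemma and turn it into a clean argument. Assume we are given an efficient subroutine $\mathsf{Count}(x)$ that returns $|\Classes_x|$ for any $x\in\Sigma^n$. The first step is to record the structural fact that drives everything: if $E$ is an orbit with minimal representative $y$ (the lexicographically least string of $E$), then $E<x$ if and only if $y\lexq x$, since any string of $E$ lex-smaller than $x$ forces the minimal one to be lex-smaller, and the converse is immediate. Consequently, writing $y^1\lexq\cdots\lexq y^t$ for the minimal representatives of the $t$ orbits in lex order (these are distinct, as the minimal representative determines its orbit), we get $|\Classes_x|=|\{i:y^i\lexq x\}|$. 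In particular $|\Classes_x|$ is a nondecreasing function of $x$ in the lex order on $\Sigma^n$, taking values in $\{0,1,\dots,t-1\}$.

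For the indexing problem, define $\psi(j)=y^j$ for $j\in\{1,\dots,t\}$. This is injective and hits every orbit exactly once, so it is a valid indexing; it remains to compute it. The key observation is that $y^j$ is precisely the lexicographically largest string $x$ with $|\Classes_x|<j$: on the one hand $|\Classes_{y^j}|=|\{1,\dots,j-1\}|=j-1<j$, and on the other hand any $x\lexq$-strictly larger than $y^j$ exceeds $y^1,\dots,y^j$ and hence has $|\Classes_x|\ge j$. Since $|\Classes_x|$ is monotone in $x$ and $\Sigma^n$ is a totally ordered set of size $q^n$, this largest string is found by binary search using $O(\log q^n)=O(n\log q)$ calls to $\mathsf{Count}$. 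I would also note that $t$ is computable as $\mathsf{Count}(q\cdots q)+1$, where $q\cdots q$ denotes the lex-largest string (whose orbit is a singleton, so $|\Classes_{q\cdots q}|=t-1$); this lets the algorithm detect and report \textbf{too large} whenever $j>t$, as required by the indexing formalism.

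For reverse indexing, given $x\in\Sigma^n$ I would first compute the minimal representative $y^i$ of the orbit of $x$ by forming all $n$ rotations $\Rot^0(x),\dots,\Rot^{n-1}(x)$ and taking the lexicographic minimum; the index of that orbit is then $|\{k:y^k\lexq y^i\}|+1=\mathsf{Count}(y^i)+1$. This costs $n$ rotations and one call to $\mathsf{Count}$. Combining all this, both $\psi$ and $\psi^{-1}$ run in $\poly(n\log q)$ time plus $\poly(n\log q)$ calls to $\mathsf{Count}$, so an efficient algorithm for $|\Classes_x|$ indeed yields efficient indexing and reverse-indexing of necklaces.

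This lemma is essentially bookkeeping, so there is no real obstacle; the only points that need care are verifying the monotonicity of $|\Classes_x|$ in the lex order and checking that the binary search correctly returns $y^j$ at the boundary indices $j=1$ and $j=t$ (and handles $j>t$). Once those are nailed down, the rest follows from the description above.
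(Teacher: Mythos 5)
Your proposal is correct and follows essentially the same route as the paper's own argument (the discussion preceding the lemma): identify orbits with their lexicographically minimal representatives $y^1\lexq\cdots\lexq y^t$, observe $|\Classes_x|$ counts the $y^i$ strictly below $x$ and is monotone, recover $y^j$ by binary search as the lex-largest $x$ with $|\Classes_x|<j$, and reverse-index via the minimal rotation plus $|\Classes_{y^i}|+1$. Your added checks of the boundary cases and the computation of $t$ for the \textbf{too large} response are fine refinements of the same argument.
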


The next section gives our algorithm to determine $|\Classes_x|$ fpr any input string $x$.

\subsection{Computing $|\Classes_x|$}

Let us define:
\begin{itemize}
\item $G_{x,p} = \bigcup_{ E \in \Classes_x: |E| =  p} E$.
\item $G_{x, \leq p} = \bigcup_{ E \in \Classes_x: |E| \mbox{ divides }  p} E$.
\end{itemize}

In Section ~\ref{sec:reduc} we reduce the problem of computating of
$|\Classes_x|$ to the problem of computing $|G_{x,\leq p}|$ for various $p$. 
The main component of the indexing algorithm is a subroutine
that computes $|G_{x, \leq p}|$ given a string $x$
and an integer $p$.
This subroutine works by building a branching program with
$n^{O(1)}$ nodes, which when given a string $y$
accepts if and only if (1) the orbit of $y$ has size dividing $p$ and
(2) $\Orbit(y) < x$.
The quantity we want to compute, $|G_{x, \leq p}|$, is therefore
simply the number of $y$ accepted by this branching program  (which, as noted above can be
computed in polynomial time via a simple dynamic program).

\subsubsection{Notation and Preliminaries}~\label{sec:notations}






\noindent
{\bf Preliminaries:}
We state some basic facts about periodic strings without proof. 

\begin{fact}\label{obs:periodic1}
Let $y$ be a string of length $n$ and let $p$ be positive integer dividing $n$. Then, $|\Orbit(y)| = p$ if and only if  $y$  has fundamental period  $p$. In particular, $y$ can be written as ${y_1}^{\frac{n}{p}}$ for an aperiodic string $y_1 \in \Sigma^p$. 
\end{fact}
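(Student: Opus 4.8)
The plan is to reduce the Fact to the identity $|\Orbit(y)| = \fp(y)$ already recorded above, proving that identity along the way, and then to add a short argument for the ``in particular'' clause. Write $y = y_0 y_1 \cdots y_{n-1}$ and recall that $\Rot^d$ denotes rightward cyclic rotation by $d$, so $\Rot^d(y)_i = y_{(i-d)\bmod n}$. First I would observe that $H = \{d \in \Z_n : \Rot^d(y) = y\}$ is a subgroup of $\Z_n$ (closure follows from $\Rot^{a+b}(y) = \Rot^a(\Rot^b(y))$, and inverses likewise), so $|\Orbit(y)| = [\Z_n : H]$ divides $n$ by Lagrange, and $|\Orbit(y)|$ equals the least positive element $d_0$ of $H$, which generates $H$.

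Next I would establish $|\Orbit(y)| = \fp(y)$ by two inequalities. For $\fp(y) \le |\Orbit(y)|$: since $d_0 \mid n$ and $\Rot^{d_0}(y) = y$, the identity $y_i = y_{(i - d_0)\bmod n}$, iterated along the residue class of $i$ modulo $d_0$, forces $y_i$ to depend only on $i \bmod d_0$, so $y = u^{n/d_0}$ with $u = y_0\cdots y_{d_0-1}$; thus $y$ has period $d_0$, giving $\fp(y) \le d_0$. For $|\Orbit(y)| \le \fp(y)$: writing $y = v^{n/\fp(y)}$ shows immediately that $\Rot^{\fp(y)}(y) = y$, i.e. $\fp(y) \in H$, so $d_0 \le \fp(y)$. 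Hence $|\Orbit(y)| = d_0 = \fp(y)$, and the first assertion of the Fact (that $|\Orbit(y)| = p$ iff $\fp(y) = p$) follows at once; note that $p \mid n$ holds automatically for any fundamental period, since being ``periodic with period $p$'' already requires $n/p \in \Z$, so the divisibility hypothesis is free.

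Finally, for the ``in particular'' clause I would argue: given $\fp(y) = p$, the definition of period $p$ lets us write $y = y_1^{n/p}$ with $y_1 \in \Sigma^p$; if $y_1$ were periodic, say $y_1 = z^{p/p'}$ for a proper divisor $p'$ of $p$ and $z \in \Sigma^{p'}$, then $y = z^{n/p'}$ would be periodic with period $p' < p$ (and $p' \mid n$ since $p' \mid p \mid n$), contradicting $\fp(y) = p$; hence $y_1$ is aperiodic. The whole argument is elementary; the only step that calls for a little care is the bookkeeping identifying $|\Orbit(y)|$ with the least rotation $d_0$ fixing $y$ and checking $d_0 \mid n$ — the orbit--stabilizer picture for $\Z_n$ acting on $\Sigma^n$ — together with the index chase showing that a rotation by a divisor of $n$ fixes $y$ precisely when $y$ is periodic with that period. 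I expect no real obstacle beyond this.
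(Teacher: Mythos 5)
Your proof is correct. The paper actually states this as one of the ``basic facts about periodic strings without proof,'' so there is no argument in the paper to compare against; your orbit--stabilizer argument (identifying $|\Orbit(y)|$ with the least positive $d_0$ fixing $y$, showing $d_0$ is a period and that $\fp(y)$ lies in the stabilizer, then handling aperiodicity of $y_1$ by descent to a proper divisor) is exactly the standard way to fill this gap and is complete as written.
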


\begin{fact}~\label{obs:periodic2}
The fundamental period of a string is a divisor of any period of the string.
\end{fact}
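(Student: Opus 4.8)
The plan is to reduce the statement to a ``gcd-closure'' property of the set of periods: if a string $y$ of length $n$ has periods $p$ and $q$ (both dividing $n$, as required by the paper's definition of ``periodic with period''), then $y$ also has period $g \defeq \gcd(p,q)$. Granting this, the Fact is immediate: let $p = \fp(y)$ and let $q$ be an arbitrary period of $y$; then $g = \gcd(p,q)$ divides $p$, hence divides $n$, so it is a legitimate candidate period, and $g \le p$. By minimality of $\fp(y)$ we conclude $g = p$, i.e.\ $p \mid q$.

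To prove the gcd-closure property, I would rephrase periodicity pointwise. Since $p \mid n$, saying ``$y$ is periodic with period $p$'' is equivalent to saying that $y[i] = y[j]$ whenever $i \equiv j \pmod p$, for all indices $i,j \in \{0,\dots,n-1\}$; equivalently, there is a function $a : \Z/p\Z \to \Sigma$ with $y[i] = a(i \bmod p)$ for all $i$. Likewise periodicity with period $q$ yields $b : \Z/q\Z \to \Sigma$ with $y[i] = b(i \bmod q)$. Put $\ell = \mathrm{lcm}(p,q)$, which divides $n$ because $p,q \mid n$. By the Chinese Remainder Theorem, as $i$ ranges over $\{0,\dots,n-1\}$ (which contains a full residue system mod $\ell$) the pairs $\bigl(i \bmod p,\, i \bmod q\bigr)$ realize exactly all pairs $(r,s) \in \Z/p\Z \times \Z/q\Z$ with $r \equiv s \pmod g$. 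Hence $a(r) = b(s)$ for every such pair.

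Next I would show that $a$ factors through $\Z/g\Z$. Given $r, r' \in \Z/p\Z$ with $r \equiv r' \pmod g$, pick any $s \in \Z/q\Z$ with $s \equiv r \pmod g$ (possible since $g \mid q$); then $a(r) = b(s)$ and, since $s \equiv r' \pmod g$ as well, $a(r') = b(s)$, so $a(r) = a(r')$. Therefore $y[i] = a(i \bmod p)$ depends only on $i \bmod g$, which is exactly the assertion that $y$ is periodic with period $g$ (and $g \mid n$). This completes the gcd-closure step, and hence the Fact.

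The main thing requiring care — and the only place there is anything to prove at all, since this is a routine fact — is the CRT bookkeeping, in particular keeping the index $i$ inside $\{0,\dots,n-1\}$. This is precisely where the paper's convention that every period divides $n$ (so that $\mathrm{lcm}(p,q) \mid n$) earns its keep: it lets the argument go through cleanly and avoids invoking the full Fine--Wilf theorem, which would be needed for arbitrary (non-divisor) periods. As an alternative to the CRT argument one could simply cite Fine--Wilf after noting that distinct proper divisors $p,q$ of $n$ satisfy $p + q - \gcd(p,q) \le n$, so that theorem applies and delivers period $\gcd(p,q)$ directly.
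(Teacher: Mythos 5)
Your argument is correct. Note, though, that the paper offers no proof to compare against: it explicitly states this fact (together with Fact~\ref{obs:periodic1}) ``without proof,'' treating it as standard background. Your route is a clean way to fill that gap under the paper's convention that a period must divide $n$: the gcd-closure step via the (non-coprime) Chinese Remainder Theorem is sound --- since $\mathrm{lcm}(p,q)\mid n$, the indices $0,\dots,n-1$ do realize exactly the pairs $(i \bmod p,\, i \bmod q)$ with the two residues congruent modulo $g=\gcd(p,q)$, and your factoring of $a$ through $\Z/g\Z$ then gives period $g$, after which minimality of $\fp(y)$ forces $\fp(y)=g\mid q$. You are also right that this divisor convention is what lets you avoid the general Fine--Wilf theorem (which would be the standard citation for arbitrary, non-divisor periods), and your alternative remark that Fine--Wilf applies anyway because proper divisors $p,q$ of $n$ satisfy $p+q-\gcd(p,q)\le n$ is accurate (the case where a period equals $n$ being trivial, since $\fp(y)$ divides $n$ by definition). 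In short: the proposal is a complete and correct proof of a statement the paper leaves unproved.
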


In particular, the fundamental period of a string is unique. We denote the fundamental period of $y$ by $\fp(y)$.




\subsubsection{Reduction to computing $|G_{x, \leq p}|$}
\label{sec:reduc}

We begin with some simple transformations that reduce the computation of
$|\Classes_x|$ to the computation of $|G_{x, \leq p}|$ (for various $p$).

\begin{lem}\label{lem: 1}
For all $x \in \Sigma^n$, 
$$ |\Classes_x|  = \sum_{y \in G_{x, \leq n}} \frac{1}{|\Orbit(y)|} = \sum_{y \in G_{x, \leq n}} \frac{1}{\fp(y)}.$$
\end{lem}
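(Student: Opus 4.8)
The plan is to prove the identity by a double-counting / partition argument on the set $G_{x,\leq n}$. First I would observe that by Fact~\ref{obs:periodic1} and Fact~\ref{obs:periodic2}, every string $y\in\Sigma^n$ has a well-defined fundamental period $\fp(y)$, which divides $n$, and moreover $|\Orbit(y)| = \fp(y)$; this already gives the second equality in the statement, so it remains to establish the first equality $|\Classes_x| = \sum_{y \in G_{x,\leq n}} \frac{1}{|\Orbit(y)|}$.

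The key step is to note that the map $y \mapsto \Orbit(y)$ partitions $G_{x,\leq n}$ into orbits, and that $G_{x,\leq n}$ is precisely the union of all orbits $E$ with $E < x$ (since $|E|$ divides $n$ for every orbit of a length-$n$ string, the condition ``$|E|$ divides $n$'' in the definition of $G_{x,\leq n}$ is vacuous, so $G_{x,\leq n} = \bigcup_{E\in\Classes_x} E$). Hence I would rewrite the sum by grouping the terms according to which orbit each $y$ lies in:
\begin{align*}
\sum_{y \in G_{x, \leq n}} \frac{1}{|\Orbit(y)|}
 \;=\; \sum_{E \in \Classes_x} \; \sum_{y \in E} \frac{1}{|E|}
 \;=\; \sum_{E \in \Classes_x} |E|\cdot\frac{1}{|E|}
 \;=\; \sum_{E \in \Classes_x} 1
 \;=\; |\Classes_x|.
\end{align*}
Here the inner sum has exactly $|E|$ terms each equal to $1/|E|$, since every $y$ in the orbit $E$ satisfies $|\Orbit(y)| = |E|$.

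The only point that needs a little care — and the one I would flag as the main (minor) obstacle — is verifying that $G_{x,\leq n}$ really is the disjoint union $\bigsqcup_{E\in\Classes_x} E$, i.e.\ that distinct orbits are disjoint (immediate, since orbits of a group action partition the ambient set) and that the defining condition of $G_{x,\leq n}$ captures exactly the orbits in $\Classes_x$. The latter follows because a string of length $n$ has orbit size equal to its fundamental period, which always divides $n$, so ``$|E|$ divides $n$'' imposes no restriction; thus $G_{x,\leq n}=G_{x,n}\cup\bigcup_{p\mid n}G_{x,p}=\bigcup_{E\in\Classes_x}E$. Once this bookkeeping is in place, the computation above is routine and the lemma follows.
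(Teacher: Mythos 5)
Your proof is correct and follows essentially the same route as the paper: both arguments group the sum over $G_{x,\leq n}$ by orbit and use that each orbit $E\in\Classes_x$ contributes $|E|\cdot\frac{1}{|E|}=1$, together with $|\Orbit(y)|=\fp(y)$ for the second equality. Your extra remark that the divisibility condition in the definition of $G_{x,\leq n}$ is vacuous (since an orbit's size, being the fundamental period, always divides $n$) is a harmless piece of bookkeeping the paper leaves implicit.
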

\begin{proof}
For $y \in G_{x, \leq n}$,  $\Rot^i(y) \in G_{x, \leq n}$ for every positive integer $i$.
Note that there are exactly $|\Orbit(y)|$ distinct strings of the form $\Rot^i(y)$.
Thus for any orbit $E \in \Classes_x$, we have $\sum_{y \in E} \frac{1}{|\Orbit(y)|} = 1$. 
Therefore:
$$ \sum_{y \in G_{x, \leq n}} \frac{1}{|\Orbit(y)|} = \sum_{E \in \Classes_x} \sum_{y \in E} \frac{1}{|\Orbit(y)|} = \sum_{E \in \Classes_x} 1 = |\Classes_x|.$$
\end{proof}

The sum on the right hand side can be split on the basis of the period of $y$. From Lemma~\ref{lem: 1} and Fact~\ref{obs:periodic1}, we have the following lemma.
\begin{lem}\label{lem: 2}
For all $x \in \Sigma^n$, 
$$ |\Classes_x| = \sum_{i\mid n} \frac{|G_{x,i}|}{i}$$
\end{lem}

So, to count $|\Classes_x|$ efficiently, it suffices to compute $|G_{x,i}|$ efficiently for each $i|n$. Now,  from the definitions, we have the following lemma. 
\begin{lem}~\label{lem: 3}
For all $x \in \Sigma^n$,
$$|G_{x,\leq p}| = \sum_{i|p}|G_{x, i}|$$
\end{lem}

From the M\"obius Inversion Formula (see Chapter 3 in~\cite{Stanley1} for more details), we have the following equality.

\begin{lem}~\label{lem:4}
$$|G_{x,p}| = \sum_{i|p}\mu\left(\frac{p}{i}\right)|G_{x,\leq i}|$$
\end{lem}

Lemma~\ref{lem:4} implies that it suffices to compute $|G_{x,\leq p}|$ efficiently for every divisor $p$ of $n$. In the next few sections, we will focus on this sub-problem and design  an efficient algorithm for this problem. We will first describe the algorithm when the alphabet is binary, and then generalize to larger alphabets. 

\subsubsection{Computing $|G_{x, \leq n}|$ efficiently for the binary alphabet}~\label{sec:binalphabet}

In this section, we will design an efficient algorithm that given a string $x \in \{0,1\}^n$ computes $|G_{x, \leq n}|$.  
On input $x$ the algorithm will construct a branching program with the property that $|G_{x,\leq n}|$ is the
number of accepting paths in the branching program.  This number of accepting paths can be computed
by a simple dynamic program as described at the beginning of Section ~\ref{sec:strategy}.

\begin{lem}~\label{lem:countautomaton}
Given as input a branching program $B$ of length $n$ over alphabet $\Sigma$, we can compute the size of the
set of accepted strings in time $\poly(|B|,\log n)$.
\end{lem}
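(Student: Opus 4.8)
The plan is to compute, by a dynamic program over the layers of the branching program, the quantity $N(v)$ defined as the number of directed paths from $v$ to an accepting node. I would process the layers in decreasing order $i = n, n-1, \ldots, 0$. For a node $v$ in layer $n$ set $N(v) = 1$ if $v$ is accepting and $N(v) = 0$ otherwise. For a node $v$ in a layer $i < n$, set
$$ N(v) = \sum_{\sigma \in \Sigma} N(w_\sigma), $$
where $w_\sigma$ is the unique node with $v \rightarrow_\sigma w_\sigma$. If $v_0$ is the start node, output $N(v_0)$.

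Correctness should follow by downward induction on $i$: since the graph is layered and every non-sink node has exactly one out-arc per symbol, the strings in $\Sigma^{\,n-i}$ are in bijection with the length-$(n-i)$ paths leaving $v$, and such a string is ``accepted starting from $v$'' precisely when its path ends at an accepting node; the displayed recurrence is just conditioning on the first symbol read. Applying this at $v_0$ with $i = 0$ gives that $N(v_0)$ is exactly the number of accepted strings.

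For the running time, note first that $|B| \ge n$ (the program has at least one node per layer), so $\poly(|B|, \log n) = \poly(|B|)$ and it suffices to bound everything by $\poly(|B|)$. Each $N(v)$ is a nonnegative integer at most $|\Sigma|^{\,n}$, hence representable in $O(n\log|\Sigma|) = \poly(|B|)$ bits. Computing $N(v)$ for a single node costs at most (out-degree of $v$) additions of such integers, so the total is $O(|B|)$ big-integer additions, i.e.\ $\poly(|B|)$ time overall, plus the time to read the input.

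The one point needing a little care is the encoding of the transition function when $|\Sigma|$ is large (as in our application, where $|\Sigma| = q$): rather than one arc per symbol per node, I would assume the out-arcs of each node are given grouped by target, as a list of pairs $(w, T)$ with $T\subseteq\Sigma$ the set of symbols sending $v$ to $w$, so that the recurrence becomes $N(v) = \sum_{(w,T)} |T|\cdot N(w)$; the same analysis then yields running time polynomial in the (smaller) description size together with $\log|\Sigma|$. Beyond this bookkeeping there is no real obstacle: this is the standard fact that source-to-sink paths in a layered DAG can be counted by dynamic programming, the only subtlety being that the path counts are exponentially large, so the cost of exact integer arithmetic must be folded into the time bound.
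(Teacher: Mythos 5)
Your proof is correct, but it takes a different computational route from the paper's. The paper does not run a layer-by-layer dynamic program at all: it observes that the number of accepted strings equals the number of (labelled) start-to-accept paths of length exactly $n$, reads this off as entries of the $n$-th power of the adjacency matrix of $B$, and computes that power by repeated squaring, which is $O(\log n)$ matrix multiplications and hence genuinely $\poly(|B|,\log n)$ even in a setting where $n$ could greatly exceed $|B|$ (e.g.\ a fixed transition structure iterated for $n$ steps given in binary). Your backward recursion $N(v)=\sum_{\sigma}N(w_\sigma)$ instead makes one pass over the $n+1$ layers, so on its face its running time is $\poly(|B|, n)$ rather than $\poly(|B|,\log n)$; you close this gap with the observation that an explicitly given length-$n$ layered branching program satisfies $|B|\geq n$, which is valid here (and is all that the paper's application needs, since the programs $B_x$ are built explicitly with $\poly(n)$ nodes across layers $0,\dots,n$). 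What your approach buys is simplicity and an explicit treatment of two points the paper leaves implicit: that accepted strings must be counted with symbol multiplicities when several symbols carry $v$ to the same $w$ (your $|T|\cdot N(w)$ refinement), and that the intermediate counts are $n\log|\Sigma|$-bit integers whose arithmetic cost must be charged; what the paper's approach buys is the stated $\log n$ dependence without any appeal to $|B|\geq n$. Either argument establishes the lemma as it is used.
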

\begin{proof}
The number accepted strings is the number of paths from the start node to the accept node, and all such paths have length exactly $n$.
Thus the number of accepted strings is the 
$i,j$ entry in the $n^{\rm{th}}$ power of the adjacency matrix of the graph.
and can thus be computed in time polynomial in the size of the graph and $\log n$ (by repeated squaring). 
\end{proof}

We now describe how to construct, for each fixed string $x \in \{0,1\}^n$, a branching program $B_x$ of size polynomial in $n$ such that
the strings accepted by $B_x$ are exactly those in  $G_{x, \leq n}$.
Lemma~\ref{lem:countautomaton} then implies that we can compute $|G_{x,\leq n}|$ in time polynomial in $n$.

For strings $x,y$, when is $y \lexq x$?  This happens and only if there exists an $i\in \{1,2,\ldots, n-1\}$ such that $y_j = x_j$ for every $j\leq i$ and $x_{i+1} > y_{i+1}$.  In the case of binary strings of length $n$, we must have $x_{i+1} = 1$ and $y_{i+1} = 0$.

\begin{definition}
The set of witnesses for $x$, denoted $L_x$, is defined by:
$$L_x = \{s0 : s1\text{ is a prefix of } x \}  $$
\end{definition}
We can summarize the discussion from the paragraph above as follows: 
\begin{obs}~\label{obs:lexineq1}
For $x, y \in \{0,1\}^n$,
we have $y \lexq x$ if and only if some prefix of $y$ lies in $L_x$.  
\end{obs}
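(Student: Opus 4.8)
The plan is to simply unwind both definitions and observe that they describe the same condition, using the characterization of lexicographic order stated immediately before the observation. Recall that for $x,y\in\{0,1\}^n$, $y\lexq x$ holds precisely when there is an index $i\in\{1,\dots,n-1\}$ with $y_j=x_j$ for all $j\le i$ and $x_{i+1}>y_{i+1}$; since the alphabet is binary, the last inequality is equivalent to $x_{i+1}=1$ and $y_{i+1}=0$. The proof is then just a matter of matching this witnessing index $i$ with membership of a length-$(i{+}1)$ prefix of $y$ in $L_x$.

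For the forward direction, I would suppose $y\lexq x$, fix $i$ as above, and set $s=y_1y_2\cdots y_i=x_1x_2\cdots x_i$. Then $s1=x_1\cdots x_i1=x_1\cdots x_{i+1}$ is a prefix of $x$, so $s0\in L_x$ by definition; and $s0=y_1\cdots y_i0=y_1\cdots y_{i+1}$ is a prefix of $y$. Hence some prefix of $y$ lies in $L_x$.

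For the reverse direction, I would suppose $w$ is a prefix of $y$ with $w\in L_x$, so $w=s0$ for some string $s$ with $s1$ a prefix of $x$. Writing $i=|s|$, the fact that $s1$ is a prefix of $x\in\{0,1\}^n$ forces $i+1\le n$, hence $i\le n-1$, and also gives $x_j=s_j=y_j$ for $j\le i$ together with $x_{i+1}=1$; since $w=s0$ is a prefix of $y$ we get $y_{i+1}=0<1=x_{i+1}$. Thus $i$ witnesses $y\lexq x$.

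I do not anticipate a real obstacle here; the only point needing the slightest care is the length bookkeeping in the reverse direction — namely that a prefix of $y$ lying in $L_x$ automatically satisfies the bound $i\le n-1$, so that $x_{i+1}$ is well defined — and this follows at once from $s1$ being a prefix of a length-$n$ string.
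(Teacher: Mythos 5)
Your proof is correct and is essentially the paper's own argument: the paper states the observation as a direct summary of the preceding characterization of $\lexq$ and the definition of $L_x$, and you have simply spelled out the matching of the witnessing index $i$ with the prefix $s0$ of $y$ in both directions. The only cosmetic point is that when the strings differ already in position $1$ the witness is $s=\Lambda$ (i.e.\ $i=0$), a case the paper's stated range $i\in\{1,\dots,n-1\}$ technically omits but which your argument handles identically.
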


 
We will now generalize this observation to  strings under rotation.
For strings $x, y$, when is $\Orbit(y) < x$?  Recall that $\Orbit(y) < x$ if for some $y' \in \Orbit(y)$, we have $y' \lexq x$.
From Observation~\ref{obs:lexineq1}, we know that this happens if and only if some $y' \in \Orbit(y)$ has some prefix $w$ in $L_x$.
Rotating back to $y$, two situations can arise. Either $y$ contains $w$ as a contiguous substring, or $w$ appears as a ``split substring"
wrapped around the end of $y$. In the latter case, $y$ has a prefix $w_1$ and a suffix $w_2$ such that $w_2w_1 = w \in L_x$. 

Recall that $G_{x, \leq n}$ is the set of $y$ with $\Orbit(y) < x$.
Thus,  $y \in G_{x, \leq n}$ if and only if  it  has a contiguous substring as a witness, or it has a witness that is wrapped around its end.
Let us separate these two cases out. 
 
\begin{definition}~\label{def:partition}
For a string $x\in \{0,1\}^n$, 
$$G_{x, \leq n}^c = \{y\in \{0,1\}^n: y \text{ contains a string in } L_x \text{ as a contiguous substring }\}$$
$$G_{x, \leq n}^w = \{y \in \{0,1\}^n: y \text{ has a prefix } w_1\text{ and suffix } w_2 \text{ such that } w_2w_1 \in L_x\}$$

\end{definition} 

From the discussion in the paragraph above, we have the following observation:

\begin{obs}~\label{obs:separate}
$$G_{x, \leq n} = G_{x, \leq n}^c \cup G_{x, \leq n}^w$$
\end{obs}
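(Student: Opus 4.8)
The statement to prove is Observation~\ref{obs:separate}: $G_{x, \leq n} = G_{x, \leq n}^c \cup G_{x, \leq n}^w$. This follows from chaining together the characterizations built up in the preceding observations, so the proof is essentially an unwinding of definitions.

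\textbf{Proof plan.} The plan is to prove the two inclusions separately, though both directions use the same chain of equivalences. Start from the definition: $y \in G_{x, \leq n}$ iff $\Orbit(y) < x$, which by definition means some $y' \in \Orbit(y)$ satisfies $y' \lexq x$. By Observation~\ref{obs:lexineq1} applied to the pair $(x, y')$, this holds iff some prefix $w$ of $y'$ lies in $L_x$. So $y \in G_{x, \leq n}$ iff there exist a rotation $y' = \Rot^i(y)$ and a prefix $w$ of $y'$ with $w \in L_x$. The key combinatorial point is then: a prefix of some rotation of $y$ is exactly a ``cyclic substring'' of $y$ of the appropriate length, i.e., it is either a contiguous substring of $y$, or it wraps around the boundary (a suffix of $y$ followed by a prefix of $y$). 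I would make this dichotomy explicit: if $w$ is a prefix of $\Rot^i(y)$ and $w$ fits entirely within the non-wrapped part of $\Rot^i(y)$, then $w$ is a contiguous substring of $y$, placing $y \in G_{x,\leq n}^c$; otherwise $w$ straddles position where $\Rot^i$ cut $y$, so $w = w_2 w_1$ where $w_2$ is a suffix of $y$ and $w_1$ is a prefix of $y$, and $w_2 w_1 \in L_x$ places $y \in G_{x,\leq n}^w$. This gives $G_{x, \leq n} \subseteq G_{x,\leq n}^c \cup G_{x,\leq n}^w$.

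For the reverse inclusion, I would run the argument backwards. If $y \in G_{x,\leq n}^c$, then $y$ has a contiguous substring $w \in L_x$; since $|w| \le |x| = n$ (as $w \in L_x$ means $w = s0$ with $s1$ a prefix of $x$, so $|w| \le n$), we can choose a rotation $y'$ of $y$ having $w$ as a prefix, and then Observation~\ref{obs:lexineq1} gives $y' \lexq x$, hence $\Orbit(y) < x$, i.e. $y \in G_{x,\leq n}$. If instead $y \in G_{x,\leq n}^w$, then $y = w_1 \cdots w_2$ (prefix $w_1$, suffix $w_2$) with $w_2 w_1 \in L_x$; the rotation $y'$ that brings $w_2$ to the front has $w_2 w_1$ as a prefix, and again Observation~\ref{obs:lexineq1} yields $y' \lexq x$ so $y \in G_{x,\leq n}$. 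Combining, $G_{x,\leq n}^c \cup G_{x,\leq n}^w \subseteq G_{x, \leq n}$.

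\textbf{Main obstacle.} There is no serious obstacle here — this is a definition-chasing lemma. The one point that needs a little care is the bookkeeping in the ``wrap-around'' case: making sure that when $w$ is a prefix of $\Rot^i(y)$ that straddles the cut point, the decomposition $w = w_2 w_1$ with $w_2$ a suffix of $y$ and $w_1$ a prefix of $y$ is set up with the concatenation in the correct order (so that it matches the definition of $G_{x,\leq n}^w$, which asks for $w_2 w_1 \in L_x$ rather than $w_1 w_2 \in L_x$). I would also note the boundary/degenerate cases ($w_1$ or $w_2$ empty, i.e. $w$ actually being a genuine prefix or suffix of $y$) are subsumed by the contiguous case, so nothing is lost. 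Everything else is immediate from Observations~\ref{obs:lexineq1} and the definitions, and the whole proof should be a short paragraph.
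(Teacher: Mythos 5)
Your proposal is correct and follows essentially the same route as the paper, which establishes this observation purely by the preceding definition-chasing discussion: $\Orbit(y) < x$ iff some rotation of $y$ has a prefix in $L_x$ (via Observation~\ref{obs:lexineq1}), and such a prefix is either a contiguous substring of $y$ or wraps around as a suffix--prefix pair $w_2 w_1 \in L_x$. Your write-up is just a more careful two-inclusion version of the same argument, with the wrap-around ordering handled correctly.
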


The branching program $B_x$ will be obtained by combining two branching programs $B_x^c$ and $B_x^w$,
where the first accepts the strings in $G_{x,\leq n}^c$ and the second accepts the strings in $G_{x,\leq n}^w$.
Each layer $j$ of the branching program $B_x$ is the product
of layer $j$ of $B_x^c$ and layer $j$ of $B_x^w$ and we have arcs $(v,v')\rightarrow_{\sigma} (w,w')$  when $v \rightarrow_{\sigma} v'$
and $w \rightarrow_{\sigma} w'$.  The accepting nodes at level $n+1$ are nodes $(v,v')$ where $v$ is an accepting node
of $B_x^c$ or $v'$ is an accepting node of $B_x^w$.  The resulting branching program clearly accepts the set of strings
accepted by $B_x^c$ or $B_x^w$.

Note that the branching programs $B_x$ produced by the algorithm are never actually ``run'', but are given as input
to the algorithm of Lemma~\ref{lem:countautomaton} in order to determine $|G_{x,\leq n}|$.

For a set of strings $W$, we will use $\p(W)$ to denote the set of all prefixes of all strings in $W$ (including the empty string $\epsilon$). Similarly, $\s(W)$ denotes the set of all suffixes of of all strings in $W$ (including the empty string $\epsilon$). 
Similarly, we will use $\st(W)$ for set of all contiguous substrings of strings in $W$. 

For a string $r$,
$Q(r)$ is the set of suffixes of $r$ that belong to $\p(L_x)$.

\paragraph{Constructing branching program $B_x^c$}
We now present an algorithm which on input $x \in \{0,1\}^n$,runs in time polynomial in $n$ and outputs a branching program $B_x^c$
that recognizes $L_x^c$.

\begin{definition}{\bf Branching program $B_x^c$}

\begin{enumerate}
\item Nodes at level $j$ are triples $(j,s,b)$ where $s \in \p(L_x)$ and $b \in \{0,1\}$. 
(We want string $s$ to be  the longest suffix of $z_1z_2\ldots z_j$
that belongs to $\p(L_x)$, and $b=1$ iff $z_1z_2\ldots z_j$ contains a substring that belongs to $L_x$.)
\item The start node is $(0,\Lambda,0)$ where $\Lambda$ is the
empty string. 
\item  The accepting nodes $(n,s,b)$ are those with $b=1$.
\item For $j \leq n$, the arc out of nodes $(j-1,s,b)$ labeled by alphabet
symbol $\alpha$ is  $(j,s',b')$ where $s'$ is the
longest string in $Q(s\alpha)$ and $b'=1$ if $s'$ contains a suffix
in $L_x$ and otherwise $b'=b$.
\end{enumerate}
\end{definition}
It is clear that the branching program can be constructed (as a directed
graph)
in time polynomial in $n$.   It remains to show that it accepts
those $z$ that have a substring that belongs to $(L_x)$.
 
Fix a string $z \in \{0,1\}^n$. Let $(j,s_j,b_j)$ be the
$j$th vertex visited by the branching program on input $z$.
Note that $s_j$ is  a suffix
of $z_1 \ldots z_j$.  Let $h_j$ be the index
such that $s=z_{h_j} \ldots z_j$; if $s$ is empty, we set $h_j=j+1$.
For $j$ between 1 and $n$ let $i_j$ be the least index
such that $z_{i_j} \ldots z_j$ belongs to $\p(L_x)$ (so
$i_j=j+1$ if there is no such string).    Note that $i_j \geq i_{j-1}$
since if $z_i \ldots z_j$ belongs to $\p(L_x)$ so does $z_i \ldots z_{j-1}$.

The branching program is designed to make the following true:

\begin{claim}
For $j$ between 1 and $n$, $h_j=i_j$ and
$b_j=1$
if and only
if a substring of $z_1 \ldots z_j$ belongs to $L_x$.
\end{claim}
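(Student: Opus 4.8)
The plan is to prove the claim by induction on $j$, establishing the two assertions ($h_j = i_j$ and the characterization of $b_j$) simultaneously, with $j=0$ as the trivial base case: at the start node $s_0 = \Lambda$ is the unique suffix of the empty word lying in $\p(L_x)$ (so $h_0 = 1 = i_0$ under the stated conventions), and $b_0 = 0$ correctly records that no substring of the empty word lies in $L_x$. The single structural fact driving everything is that $\p(L_x)$ is closed under taking prefixes, together with the trivial inclusion $L_x \subseteq \p(L_x)$ (every word is a prefix of itself).

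For the inductive step on the suffix component, I would first restate $h_j = i_j$ in the equivalent form ``$s_j$ is the \emph{longest} suffix of $z_1\cdots z_j$ belonging to $\p(L_x)$'', since among suffixes of a fixed string the longest one is exactly the one with the least starting index. Assume inductively that $s_{j-1}$ is the longest suffix of $z_1\cdots z_{j-1}$ in $\p(L_x)$. Given any nonempty suffix $u$ of $z_1\cdots z_j$ with $u \in \p(L_x)$, its last letter is $z_j$, and deleting that letter yields a prefix of $u$, which therefore also lies in $\p(L_x)$ by prefix-closure and is a suffix of $z_1\cdots z_{j-1}$; hence it is a suffix of $s_{j-1}$, so $u$ is a suffix of $s_{j-1}z_j$. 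Conversely every element of $\p(L_x)$ that is a suffix of $s_{j-1}z_j$ is a suffix of $z_1\cdots z_j$ (the empty $u$ being trivial). Thus the longest suffix of $z_1\cdots z_j$ in $\p(L_x)$ equals the longest element of $Q(s_{j-1}z_j)$, which is exactly $s_j$ by the definition of the transition function, giving $h_j = i_j$.

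For the bit component, split the substrings of $z_1\cdots z_j$ into those that are substrings of $z_1\cdots z_{j-1}$ and those ending at position $j$, i.e.\ suffixes of $z_1\cdots z_j$. By the induction hypothesis $b_{j-1}=1$ exactly when the first kind contains a member of $L_x$, so it remains to check that ``$s_j$ has a suffix in $L_x$'' captures the second kind. If $s_j$ has a suffix $w \in L_x$, then since $s_j$ is a suffix of $z_1\cdots z_j$ so is $w$; conversely if some suffix $w$ of $z_1\cdots z_j$ lies in $L_x \subseteq \p(L_x)$, then by the argument of the previous paragraph $w$ is a suffix of $s_j$. Combining this with the transition rule $b_j = 1$ iff ($s_j$ has a suffix in $L_x$) or $b_{j-1}=1$ closes the induction, and taking $j=n$ then shows $B_x^c$ accepts precisely the strings with a substring in $L_x$, i.e.\ $G_{x,\leq n}^c$.

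I do not expect a genuine obstacle: all the content sits in the prefix-closure of $\p(L_x)$, which is what makes ``the longest $\p(L_x)$-suffix so far'' a state that can be maintained one letter at a time. The only points needing a little care are (i) keeping the empty-string conventions for $\p(L_x)$, the sets $Q(r)$, and the start node mutually consistent so that the base case and first step are clean, and (ii) being explicit that ``longest suffix in $\p(L_x)$'' and ``least starting index $i_j$'' are literally the same piece of data, so that the branching-program state $s_j$ really does equal $z_{i_j}\cdots z_j$.
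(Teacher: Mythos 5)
Your proof is correct and follows essentially the same route as the paper: an induction on $j$ that uses prefix-closure of $\p(L_x)$ to show that the state $s_j$ is the longest suffix of $z_1\cdots z_j$ lying in $\p(L_x)$ (equivalently $h_j=i_j$), and then reads off the $b_j$ invariant from the transition rule. The only cosmetic difference is that you verify the $b_j$ statement prefix-by-prefix, while the paper argues via the minimal endpoint $k$ of an $L_x$-substring; the underlying induction is the same.
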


The claim for $b_j=1$ implies that the branching program accepts
the desired set of strings.

\begin{proof}
The claim follows easily by induction, where the basis $j=0$ is
trivial. Assume $j>0$. First we show that $h_j=i_j$. By induction  $h_{j-1}=i_{j-1}$ and by definition of $h_j$ and $i_j$ we have
$i_j \leq h_j$.  To show $h_j \leq i_j$, note that
since $i_j \geq i_{j-1}=h_{j-1}$, the string $z_{i_{j}}\ldots z_j$
is in $Q(t_{j-1}\alpha)$ and so is considered in the choice of $s_j$
and thus $h_j=i_j$.

For the claim on $b_j$, if $z$ has no substring in $L_x$ then
$b_j$ remains 0 by induction.  If $z$ has a substring in $L_x$
let $z_i \ldots z_k$ be such a substring with $k$ minimum.
Then by the claim on $t_k$, $h_k\leq i$, and so
$z_i \ldots z_k$ is a suffix of $s_k$ and so $b_k=1$,
and for all $j \geq k$, $b_j$ continues to be 1.
\end{proof}

\paragraph{Constructing branching program $B_x^w$}
We now present an algorithm which on input $x \in \{0,1\}^n$,runs in time polynomial in $n$ and outputs a branching program $B_x^w$
that accepts the set of strings $z$ that have a nonempty
suffix $u$ and nonemtpy prefix $v$ such that $uv$ belongs to $L_x$.

\begin{definition}{\bf Branching program $B_x^w$}
\begin{enumerate}
\item Nodes at level $j$ are triples $(j,s,p)$ where $p,s \in \p(L_x)$.
(String $s$ will be  the longest suffix of $z_1z_2\ldots z_j$
that belongs to $\p(L_x)$ (as in $B_x^c$) 
and $p$ is the longest prefix of $z_1z_2 \ldots z_j$
that belongs to $\p(L_x)$.
\item The start node is $(0,\Lambda,\Lambda)$ where $\Lambda$ is the
empty string. 
\item  The accepting states are those states $(n,s,p)$ such that $p$
has a nonempty prefix $p'$ and $s$ has a nonempty suffix $s'$ such that
$s'p' \in L_x$.
\item For $j \leq n$, the arc out of state $(j-1,s,p)$ labeled by alphabet
symbol $\alpha$ is  $(j,s',p')$ where $s'$ is the
longest string in $Q(s\alpha)$ and $p'=p\alpha$ if $|s|=j-1$
and $p\alpha \in \p(L_x)$ and $p'=p$ otherwise.
\end{enumerate}

\end{definition}
It is clear that the branching program can be constructed (as a directed
graph)
in time polynomial in $n$.   It remains to show that it accepts
$L_x^w$.

Fix a string $z \in \{0,1\}^n$. Let $(j,s_j,p_j)$ be the
$j$th node visited by the branching program on input $z$.
Notice that $s_j$ is calculated the same way in $B_x^w$ as
in $B_x^c$ and so $s_j$ is the longest suffix of $z_1\ldots z_j$
that belongs to $\p(L_x)$.

An easy induction shows that $p_j$ is the longest prefix
of $z_1\ldots z_j$ belonging to $\p(L_x)$: Let $k$ be the length
of  the longest prefix of $z$ belonging to $\p(L_x)$.
For $j \leq k$ we have $p_j=z_1 \ldots z_j$ and
for $j>k$, $p_j=z_1 \ldots z_k$.

Finally, we need to show that the branching program
accepts $z$ if and only if $z$ has a a nonempty suffix $s'$
and $z$ has a nonempty prefix $p'$ such that $s'p' \in L_x$.
If the program accepts then the acceptance condition
and the fact that $s_n$ is a suffix of $z$ and $p_n$ is a prefix
of $z$ implies that $z$ has the required suffix and prefix. 
Conversely, if $z$ has such a prefix $p'$ and suffix $s'$,
then they each belong to $\p(L_x)$. Since $p_n$
is the longest prefix of $z$ belonging to $\p(L_x)$,
$p'$ is a prefix of $p_n$ and since $s_n$ is the longest suffix of $z$
belonging to $\p(L_x)$, $s'$ is a suffix of $t_n$.
So the branching program will accept.

\paragraph{Putting things together}
From the constructions, it is clear that the size of the branching programs $B_x^w$ and $B_x^c$ are polynomial in the size of $L_x$ and hence polynomial in $n = |x|$.  Moreover, by a product construction, we can efficiently construct the deterministic finite branching program 
$B_x$ which accepts the strings accepted by $B_x^w$ or $B_x^c$,
which is  $G_{x, \leq n}$. This observation, along with  Lemma~\ref{lem:countautomaton} implies the following lemma. 

\begin{lem}~\label{lem:mainlem1}
There is an algorithm which takes as input a string $x$ in $\{0,1\}^n$ and outputs the size of $G_{x, \leq n}$ in time polynomial in $n$.
\end{lem}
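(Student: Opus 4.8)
The plan is to combine the two branching-program constructions already given ($B_x^c$ recognizing $G_{x,\leq n}^c$ and $B_x^w$ recognizing $G_{x,\leq n}^w$) into a single branching program $B_x$ via the layered product construction described just before the statement, and then invoke Lemma~\ref{lem:countautomaton} to count its accepting strings. First I would bound the sizes: the set $L_x$ has at most $n$ elements (one per position where $x$ has a $1$), every string in $L_x$ has length at most $n$, so $\p(L_x)$, $\s(L_x)$, $\p(\s(L_x))$, and $\st(L_x)$ all have size $O(n^2)$, and each can be listed in time $\poly(n)$. Hence $B_x^c$, whose level-$j$ nodes are triples $(j,s,b)$ with $s\in\p(L_x)$, $b\in\{0,1\}$, has $O(n)$ levels each of size $O(n^2)$, and similarly $B_x^w$, whose level-$j$ nodes are triples $(j,s,p)$ with $s,p\in\p(L_x)$, has $O(n)$ levels each of size $O(n^4)$; both are constructible in $\poly(n)$ time as directed graphs, since each transition rule (finding the longest string in $Q(s\alpha)$, checking membership in $\p(L_x)$, etc.) is a $\poly(n)$-time computation.

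Next I would form the product $B_x$: its level-$j$ node set is (level-$j$ of $B_x^c$) $\times$ (level-$j$ of $B_x^w$), with an arc $(v,v')\rightarrow_\sigma(w,w')$ exactly when $v\rightarrow_\sigma w$ in $B_x^c$ and $v'\rightarrow_\sigma w'$ in $B_x^w$, and accepting nodes at level $n$ being those $(v,v')$ with $v$ accepting in $B_x^c$ or $v'$ accepting in $B_x^w$. This $B_x$ is again a read-once branching program of length $n$ with $\poly(n)$ nodes, constructible in $\poly(n)$ time. A string $z\in\{0,1\}^n$ is accepted by $B_x$ iff it is accepted by $B_x^c$ or by $B_x^w$, i.e. iff $z\in G_{x,\leq n}^c\cup G_{x,\leq n}^w$, which by Observation~\ref{obs:separate} equals $G_{x,\leq n}$. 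Therefore the number of accepting strings of $B_x$ is exactly $|G_{x,\leq n}|$, and by Lemma~\ref{lem:countautomaton} this number is computable in time $\poly(|B_x|,\log n)=\poly(n)$.

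I do not expect any real obstacle here — this is an assembly step. The one point requiring a sentence of care is the correctness of the product acceptance condition, which follows directly from the already-established correctness statements (the Claim for $B_x^c$ and the prefix/suffix analysis for $B_x^w$) together with Observation~\ref{obs:separate}; and the one point requiring a size estimate is that $|\p(L_x)|$ is polynomial in $n$, which is immediate from $|L_x|\le n$ and every element of $L_x$ having length at most $n$. Putting these together yields Lemma~\ref{lem:mainlem1}: on input $x\in\{0,1\}^n$, construct $B_x^c$ and $B_x^w$, take their product to get $B_x$, and apply Lemma~\ref{lem:countautomaton} to output $|G_{x,\leq n}|$, all in $\poly(n)$ time.
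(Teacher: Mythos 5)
Your proposal is correct and follows essentially the same route as the paper: it forms the product of $B_x^c$ and $B_x^w$, identifies its accepted set with $G_{x,\leq n}^c \cup G_{x,\leq n}^w = G_{x,\leq n}$ via Observation~\ref{obs:separate}, and counts accepting strings with Lemma~\ref{lem:countautomaton}. The explicit size bounds on $\p(L_x)$ and the two branching programs are a welcome (if routine) addition to what the paper states only implicitly.
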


\subsubsection{Computing $|G_{x, \leq p}|$ efficiently}
In this section, we will show that for every $p|n$, we can compute the quantity $|G_{x, \leq p}|$ efficiently. The algorithm will be a small variation of our algorithm for computing $|G_{x, \leq n}|$ from the previous section. Let $p$ be a divisor of $n$ with $p < n$. Every string $y \in G_{x, \leq p}$ is of the form $a^{\frac{n}{p}}$ for some $a \in \{0,1\}^p$, and every string in $\Orbit(y)$ is of the form $(\Rot^i(a))^{\frac{n}{p}}$, for some $i \leq p$. Let us write the string $x$ as $x_1x_2\ldots x_{\frac{n}{p}}$ where for each $i$, $x_i$ is of length exactly $p$. We will now try to characterize the strings in $G_{x,\leq p}$.
From the definitions, $y = a^{\frac{n}{p}} \in G_{x, \leq p}$  if and only if there is a rotation $0 \leq i < p$ such that $(\Rot^i(a))^{\frac{n}{p}}$ has a prefix in $L_x$.  This, in turn, can happen if and only if there is an $i < p$ such that one of the following is true.
\begin{itemize}
\item $\Rot^i(a) < x_1$ in~\lex order, or
\item there is $j$, $0 < j < \frac{n}{p}$, such that $\Rot^i(a) = x_1 = x_2 = x_3 = \ldots = x_i$ and $\Rot^i(a) < x_{i+1}$ in~\lex order.
\end{itemize}
The strings $y = a^{\frac{n}{p}}$ for which $a$ has a rotation which is less than $x_1$ in~\lex order are exactly the strings of the form $c^{\frac{n}{p}}$ with $c \in G_{x_1, \leq p}$. Via the algorithm of the previous subsection, there is a polynomial in $n$ time algorithm which outputs a branching program recognizing $G_{x_1, \leq p}$. The only strings which satisfy the second condition are of the form ${c}^{\frac{n}{p}}$, where $c$ is a rotation of $x_1$ and $x_1 < x_{i+1}$ in~\lex order.
There are at most $|\Orbit(x_1)|$ such strings, and we can count them directly given $x$.

This gives us our algorithm for computing $|G_{x, \leq p}|$:\\
\noindent {\bf Computing $|G_{x, \leq p}|$:}\\
\noindent{\bf Input:} 
\begin{itemize}
\item Integers $n, p$ such that $p|n$
\item A string $x \in \{0,1\}^n$
\end{itemize}  
\noindent{\bf Algorithm:}
\begin{enumerate}
\item Write $x$ as $x = x_1x_2\ldots x_{\frac{n}{p}}$ where $|x_i| = p \forall i \in [\frac{n}{p}]$
\item Construct a branching program $A_{x_1}$ such that $L(A_{x_i}) \cap \{0,1\}^p = G_{x_1, \leq p}$
\item Let $M$ be the number of strings of length $p$ accepted by $A_{x_1}$
\item If there is an $0 < i < \frac{n}{p}$ such that $x_1 = x_2 = x_3 = \ldots x_i$ and $x_1 < x_{i+1}$ in~\lex order,
 and $x_1 \notin L(A_{x_1)}$, then output $M+|\Orbit(x_1)|$, else output $M$.
\end{enumerate}

From the construction in Section~\ref{sec:binalphabet} and Lemma~\ref{lem:mainlem1}, it follows that we can construct $A_{x_1}$ and count $M$ in time polynomial 
in $n$. We thus have the following lemma. 

\begin{lem}~\label{lem:divisors}
For any divisor $p$ of $n$ and string $x\in \{0,1\}^x$, we can compute the size of the set $G_{x, \leq p}$ in time $\poly(n)$. 
\end{lem}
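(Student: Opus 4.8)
The plan is to adapt the binary-alphabet construction from Section~\ref{sec:binalphabet} to count strings $y \in G_{x,\leq p}$, using the key structural fact that every $y \in G_{x,\leq p}$ is of the form $a^{n/p}$ for $a \in \{0,1\}^p$, so it suffices to understand when such a periodic string has a rotation lexicographically below $x = x_1 x_2 \cdots x_{n/p}$ (with $|x_i| = p$). First I would write out the characterization: $a^{n/p} \in G_{x,\leq p}$ if and only if some rotation $\Rot^i(a)$, $0 \le i < p$, gives $(\Rot^i(a))^{n/p} \lexq x$, and because the repeating block has length exactly $p$, comparing $(\Rot^i(a))^{n/p}$ with $x$ lexicographically reduces to comparing the block $\Rot^i(a)$ against the successive blocks $x_1, x_2, \dots$: either $\Rot^i(a) \lexq x_1$, or $\Rot^i(a) = x_1 = \cdots = x_j$ for some $j \ge 1$ and then $\Rot^i(a) \lexq x_{j+1}$.

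Next I would split the count into two disjoint (or easily-deduplicated) parts according to this dichotomy. The first part consists of strings $a^{n/p}$ where $a$ has a rotation strictly below $x_1$; these are exactly the images under $c \mapsto c^{n/p}$ of $c \in G_{x_1,\leq p}$, and by Lemma~\ref{lem:mainlem1} (applied with alphabet $\{0,1\}$ and length parameter $p$, since $G_{x_1,\leq p}$ over $\{0,1\}^p$ is precisely the $G$-set of the previous subsection for the string $x_1$) we can construct in $\poly(n)$ time a branching program recognizing $G_{x_1,\leq p}$ and count $M := |G_{x_1,\leq p}|$ via Lemma~\ref{lem:countautomaton}. The map $c \mapsto c^{n/p}$ is injective, so this contributes exactly $M$. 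The second part consists of strings $c^{n/p}$ where $c$ is a rotation of $x_1$ itself (so $c \notin G_{x_1,\leq p}$, otherwise it was already counted) and there exists $0 < i < n/p$ with $x_1 = x_2 = \cdots = x_i$ and $x_1 \lexq x_{i+1}$; this adds $|\Orbit(x_1)| = \fp(x_1)$ whenever that condition on $x$ holds and $x_1 \notin L(A_{x_1})$, and $0$ otherwise. Both the condition check and the computation of $|\Orbit(x_1)|$ are trivially $\poly(n)$.

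Assembling these gives exactly the algorithm displayed above: write $x = x_1 \cdots x_{n/p}$, build $A_{x_1}$, set $M$ to its count of length-$p$ accepted strings, and output $M + |\Orbit(x_1)|$ or $M$ according to the second-part condition. Correctness follows from the characterization plus the observation that the two parts are genuinely disjoint (a rotation of $x_1$ is never strictly below $x_1$, so it is never in $G_{x_1,\leq p}$), and the running time is $\poly(n)$ by Lemmas~\ref{lem:mainlem1} and~\ref{lem:countautomaton} together with the elementary block-comparison bookkeeping. I expect the only delicate point to be the careful double-counting analysis at the boundary case $i = 0$ versus the ``all blocks equal'' case, and making sure the periodic string $x_1^{n/p}$ itself is counted exactly once: it lies in $G_{x,\leq p}$ precisely when the second-part condition triggers or when some nontrivial rotation of $x_1$ is below $x_1$ (the latter being subsumed in $G_{x_1,\leq p}$), and one must check these do not overlap, which is where the hypothesis $x_1 \notin L(A_{x_1})$ in step~4 does its work. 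Everything else is a direct transcription of the previous subsection, so no new ideas are required beyond this reduction.
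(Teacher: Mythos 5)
Your proposal is correct and takes essentially the same route as the paper: split $x$ into length-$p$ blocks $x_1\cdots x_{n/p}$, count the contribution of strings $c^{n/p}$ with $c \in G_{x_1,\leq p}$ via the previous subsection's branching-program count applied to $x_1$, and add $|\Orbit(x_1)|$ exactly when the block condition ($x_1=\cdots=x_i$ and $x_1$ lexicographically below $x_{i+1}$ for some $0<i<n/p$) holds and $x_1 \notin L(A_{x_1})$. Your parenthetical claim that a rotation of $x_1$ is never strictly below $x_1$ is not true in general, but this does not affect the argument, since your later observation that the check $x_1 \notin L(A_{x_1})$ prevents double-counting is precisely how the paper's algorithm handles that overlap.
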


We now have all the ingredients for the proof of the following theorem, which is a special case of Theorem~\ref{thm:indexing} when the alphabet under consideration is $\{0,1\}$. 
\begin{theorem}
There is an algorithm for indexing necklaces of length $n$ over the alphabet $\{0,1\}$,
which runs in time $\poly(n)$.
\end{theorem}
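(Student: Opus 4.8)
The plan is to assemble the pieces already developed in this section into a single indexing algorithm, via the reduction stated in Lemma~\ref{lem:reduction}. By that lemma, it suffices to exhibit a $\poly(n)$-time algorithm that, given $x \in \{0,1\}^n$, computes $|\Classes_x|$. First I would invoke Lemma~\ref{lem: 2}, which expresses $|\Classes_x| = \sum_{i \mid n} |G_{x,i}|/i$; since $n$ has at most $\poly(n)$ divisors (in fact $O(\log n)$ of them suffice to enumerate, but even $n$ of them is fine), it is enough to compute each $|G_{x,i}|$ in polynomial time. Next, by Lemma~\ref{lem:4} (M\"obius inversion), each $|G_{x,i}|$ is a $\Z$-linear combination of the quantities $|G_{x,\leq j}|$ over divisors $j$ of $i$, so it suffices to compute $|G_{x,\leq p}|$ for every divisor $p$ of $n$. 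That last task is exactly Lemma~\ref{lem:divisors}, which gives a $\poly(n)$-time algorithm.

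Putting these together: on input $x$, the algorithm enumerates the divisors of $n$; for each divisor $p$ it runs the algorithm of Lemma~\ref{lem:divisors} to obtain $|G_{x,\leq p}|$; it then forms $|G_{x,p}| = \sum_{i \mid p} \mu(p/i) |G_{x,\leq i}|$ as in Lemma~\ref{lem:4}; and finally it outputs $\sum_{i \mid n} |G_{x,i}|/i$ as in Lemma~\ref{lem: 2}. Every step runs in time $\poly(n)$, and there are only $\poly(n)$ divisors to loop over, so the total running time is $\poly(n)$. This is precisely a $\poly(n)$-time algorithm computing $|\Classes_x|$, so Lemma~\ref{lem:reduction} yields a $\poly(n)$-time indexing (and reverse-indexing) algorithm for necklaces of length $n$ over $\{0,1\}$, which is the claimed theorem.

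One point worth double-checking in writing this up is that the M\"obius-inversion step and the final summation produce integers and that the arithmetic involved stays polynomially bounded in bit-length: the quantities $|G_{x,\leq p}|$ are at most $2^n$, so they have $O(n)$ bits, and summing/scaling $O(\log n)$ of them keeps everything within $\poly(n)$ bits, so there is no hidden blow-up. I would also remark that the whole pipeline never actually "runs" any branching program on an input — the branching programs of Section~\ref{sec:binalphabet} are only constructed and then fed to the path-counting routine of Lemma~\ref{lem:countautomaton}. There is no real obstacle here; the theorem is essentially a bookkeeping corollary of Lemmas~\ref{lem:reduction}, \ref{lem: 2}, \ref{lem:4}, and \ref{lem:divisors}. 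The only mild subtlety — and the place I would be most careful — is making sure the divisor bookkeeping is stated cleanly (e.g. that $G_{x,\leq p}$ and $G_{x,i}$ are linked correctly through Lemmas~\ref{lem: 3} and \ref{lem:4}), since these index-chasing identities are easy to garble.
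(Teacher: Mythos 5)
Your proposal is correct and takes essentially the same route as the paper's own proof, which likewise assembles Lemma~\ref{lem: 2}, Lemma~\ref{lem: 3}, Lemma~\ref{lem:4}, Lemma~\ref{lem:countautomaton} and Lemma~\ref{lem:divisors} (with the reduction of Lemma~\ref{lem:reduction} supplying the final step from computing $|\Classes_x|$ to indexing). Your added remarks on bit-length bookkeeping and on the divisor-chasing identities are harmless elaborations of the same argument.
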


\begin{proof}
The proof simply follows by plugging together the conclusions of Lemma~\ref{lem: 2}, Lemma~\ref{lem: 3}, Lemma~\ref{lem:4}, Lemma~\ref{lem:countautomaton} and Lemma~\ref{lem:divisors}.
\end{proof}

It is not difficult to see that the indexing algorithm can be used to obtain a reverse indexing algorithm as well and hence, we also obtain a special case of Theorem~\ref{thm:reverse_indexing} for the binary alphabet. 

\subsubsection{Indexing necklaces over large alphabets}

In this subsection we how to handle the case of general alphabets
$\Sigma$ (with $|\Sigma| = q$). A direct generalization
of the algorithm for the case of the binary alphabet, where the set $L_x$ is appropriately defined, will run in time polynomial in $n$ and $q$. 
Our goal here is to improve the running time to polynomial in $n$ and $\log q$. 

The basic idea is to represent the elements in $\Sigma$ by binary strings of length $t \defeq \lceil \log q \rceil$.
Let $\bin : \Sigma \to \{0,1\}^t$ be an injective map whose image is the set $\Gamma$ of $q$ lexicographically smallest strings in $\{0,1\}^t$.
Extend this to a map $\bin : \Sigma^n \to \{0,1\}^{tn}$ in the natural way.

We now use the map $\bin$ to convert our indexing/counting problems over the large alphabet $\Sigma$ to a
related problem over the small alphabet $\{0,1\}$.
For $x \in \Sigma^n$, we have $\bin(\Rot^i(x)) = \Rot^{ti}(\bin(x))$.
For an orbit $E \subseteq \Sigma^n$ and $x \in \{0,1\}^{tn}$,
we say $E < x$ if some element $z \in E$ satisfies $\bin(z) \lexq x$.

Let $\Classes_x$ be the set of orbits $E \subseteq \Sigma^n$ which
are less than $x$.
For each $x \in \{0,1\}^{tn}$ and $p \mid n$, define:
\begin{enumerate}
\item $$G_{x, p} = \bigcup_{E < x, |E| = p} E.$$
\item $$ G_{x, \leq p} = \bigcup_{E < x, |E| \mbox{ divides } p} E.$$
\end{enumerate}

The following identity allows us to count $G_{x, \leq n}$:
$$|G_{x, \leq n}| = |\{ y \in \{0,1\}^{tn} \mid  y \in \Gamma^n, \exists i < n \mbox{ s.t. } \Rot^{it}(y) \lexq x  \}|.$$
It is easy to efficiently produce a branching program $A_0$ such that
$L(A_0) \cap \{0,1\}^{tn} = \Gamma^n$.
As we will describe below, the methods of the previous section
can be easily adapted to efficiently produce a branching program
$A_x$ such that
$$L(A_x) \cap \{0,1\}^{tn} = \{ y \in \{0,1\}^{tn} \mid  \exists i < n \mbox{ s.t. } \Rot^{it}(y) \lexq x \}.$$

The following lemma will be crucial in the design of this branching program.
\begin{lem}~\label{lem:largesigma}
Let $y \in \{0,1\}^{tn}$. There exists $i < n$ such that $\Rot^{it}(y) \lexq x$ 
if and only if at least one of the following events occurs:
\begin{enumerate}
\item there exists $w \in L_x$ such that $w$ appears as a contiguous substring of $y$ starting at a coordinate $j$ with $j \equiv 0 \mod t$ (where the coordinates of $x$ are $0,1, \ldots, (tn-1)$).
\item there exist strings $w_1, w_2$ such that $w_1w_2 \in L_x$,  
$w_2$ is a prefix of $y$, $w_1$ is a suffix of $y$, and $|w_1| \equiv 0 \mod t$.
\end{enumerate}
\end{lem}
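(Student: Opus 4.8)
The plan is to mimic exactly the case analysis used in the binary alphabet section, but now tracking the extra divisibility-by-$t$ constraint on where witnesses can occur. Recall that by Observation~\ref{obs:lexineq1}, for two binary strings $u, v$ of the same length we have $u \lexq v$ if and only if some prefix of $u$ lies in $L_v$. Here we have $\Rot^{it}(y) \lexq x$ for some $i < n$ if and only if some rotation of $y$ \emph{by a multiple of $t$} has a prefix in $L_x$. So first I would fix an $i < n$ with $\Rot^{it}(y) \lexq x$ and let $w \in L_x$ be a prefix of $\Rot^{it}(y)$. Rotating back to $y$, the occurrence of $w$ inside $\Rot^{it}(y)$ starting at coordinate $0$ corresponds to an occurrence of $w$ inside $y$ starting at coordinate $it \bmod tn$, which is $\equiv 0 \bmod t$. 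This occurrence is either entirely contained in $y$ (giving case 1, with $j = it$) or it wraps around the end of $y$ (giving case 2). In the wrap-around case, $y$ has a suffix $w_1$ and a prefix $w_2$ with $w_1 w_2 = w \in L_x$, and $|w_1|$ equals the number of coordinates of $y$ from position $it$ to position $tn-1$, namely $tn - it = t(n-i)$, which is $\equiv 0 \bmod t$. This establishes the forward direction.

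For the converse, I would run the same computation backwards. Suppose case 1 holds: $w \in L_x$ occurs as a contiguous substring of $y$ starting at some coordinate $j \equiv 0 \bmod t$, say $j = it$ with $i < n$ (note $j < tn$ forces $i < n$). Then $w$ is a prefix of $\Rot^{it}(y)$, so by Observation~\ref{obs:lexineq1} applied to the binary strings $\Rot^{it}(y)$ and $x$ (both of length $tn$), we get $\Rot^{it}(y) \lexq x$, as desired. Suppose instead case 2 holds: $w_1 w_2 \in L_x$ with $w_2$ a prefix of $y$, $w_1$ a suffix of $y$, and $|w_1| = mt$ for some integer $m$; necessarily $1 \le m \le n$ since $|w_1| \le |y| = tn$ and $w_1$ is nonempty (if $w_1$ is empty we fall into case 1 with $i=0$, which is fine). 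Set $i = n - m$, so $0 \le i < n$ and $it = tn - mt = tn - |w_1|$. Rotating $y$ rightward by $it$ positions brings the length-$|w_1|$ suffix $w_1$ to the front followed by the prefix $w_2$, so $w_1 w_2$ is a prefix of $\Rot^{it}(y)$; since $w_1 w_2 \in L_x$, Observation~\ref{obs:lexineq1} again gives $\Rot^{it}(y) \lexq x$. (The boundary subcase $i = 0$, i.e. $m = n$, means $w_1 = y$ and $w_2 = \epsilon$; then $w_1 w_2 = y \in L_x$ is a prefix of $\Rot^0(y) = y$, so $y \lexq x$, still consistent.)

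The main subtlety — and the only thing beyond bookkeeping — is making sure the indexing of coordinates and the direction of rotation are handled consistently, in particular verifying that a rightward rotation by $it$ positions really does move the occurrence at position $it$ to the front (for case 1) and really does move the length-$mt$ suffix to the front (for case 2), and that the resulting positions land at multiples of $t$ because $it$ and $tn - mt$ are both multiples of $t$. Once this correspondence between ``rotation by a multiple of $t$ having a prefix in $L_x$'' and ``a witness occurring at a position $\equiv 0 \bmod t$, possibly wrapped'' is pinned down, the lemma follows. I expect no genuine obstacle here; this is the large-alphabet analogue of Observation~\ref{obs:separate}, with the extra constraint $j \equiv 0 \bmod t$ (resp. $|w_1| \equiv 0 \bmod t$) arising precisely because rotations of $\bin(x)$ that correspond to rotations of $x$ itself are exactly the rotations by multiples of $t$.
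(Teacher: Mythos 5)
Your proposal is correct and is essentially the paper's intended argument: the paper states Lemma~\ref{lem:largesigma} without proof, regarding it as the same prefix/wrap-around case analysis that underlies Observation~\ref{obs:lexineq1} and Observation~\ref{obs:separate}, now with the starting position (resp.\ wrap length) constrained to multiples of $t$ because only rotations of $\bin(\cdot)$ by multiples of $t$ correspond to rotations over $\Sigma$ --- exactly your reasoning. The one bookkeeping slip is that your index arithmetic (witness starting at coordinate $it$, wrap suffix of length $tn-it$, choice $i=n-m$) follows a leftward-rotation convention while the paper's $\Rot$ rotates rightward; since the lemma only asserts existence of some $i<n$ and replacing your $i$ by $(n-i)\bmod n$ restores the convention, this does not affect correctness.
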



Given this lemma, the construction of $A_x$ follows easily via
the techniques of the previous subsections. The main addition
is that one needs to remember the value of the current
coordinate mod $t$, which can be done by blowing up the number of states
of the branching program by a factor $t$.

Intersecting the accepted sets of $A_x$ and $A_0$ gives us our desired branching program
which allows us to count $|G_{x, \leq n}|$.
This easily adapts to also count $|G_{x, \leq p}|$ for each $p \mid n$.

We conclude using the ideas of Section~\ref{sec:reduc}.
We can now compute $|G_{x, p}|$ for each $x$ and each $p \mid n$. From Lemma~\ref{lem: 2}, Lemma~\ref{lem: 3} and Lemma~\ref{lem:4}, it follows that for every $x$, we can compute $|\Classes_x|$ efficiently.  We thus get our main indexing theorem for necklaces from Lemma~\ref{lem:reduction}.


\begin{theorem}~\label{thm:largesigma}
There are $\poly(n, \log|\Sigma|)$-time indexing and reverse-indexing algorithms for necklaces of length $n$ over $\Sigma$.

Furthermore, there are $\poly(n, \log|\Sigma|)$-time indexing and reverse-indexing algorithms for necklaces of length $n$ over $\Sigma$ with fundamental period exactly $n$.
\end{theorem}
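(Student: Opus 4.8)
The plan is to build on the binary-alphabet algorithm by encoding each symbol of $\Sigma$ as a block of $t = \lceil \log q\rceil$ bits via the injective map $\bin$, so that necklaces over $\Sigma$ of length $n$ correspond to a structured subset of binary strings of length $tn$. The key reduction, already stated in Lemma~\ref{lem:reduction}, tells us that it suffices to compute $|\Classes_x|$ for arbitrary $x \in \{0,1\}^{tn}$; by Lemmas~\ref{lem: 2}, \ref{lem: 3} and~\ref{lem:4} this in turn reduces to computing $|G_{x,\leq p}|$ for each divisor $p$ of $n$. So the heart of the argument is an efficient construction, given $x$, of a branching program whose accepting set is exactly $G_{x,\leq p}$, intersected with the branching program $A_0$ recognizing $\Gamma^n$ (the strings that are genuine $\bin$-encodings). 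First I would establish Lemma~\ref{lem:largesigma}: since $\bin(\Rot^i(x)) = \Rot^{ti}(\bin(x))$, the cyclic rotations of an encoded string are precisely the rotations of the binary string by multiples of $t$; Observation~\ref{obs:lexineq1} then says $\Rot^{it}(y) \lexq x$ iff some prefix of $\Rot^{it}(y)$ lies in $L_x$, and unrolling the rotation yields the two cases (a witness $w \in L_x$ occurring as a contiguous substring starting at a coordinate $\equiv 0 \bmod t$, or a witness split as a suffix $w_1$ with $|w_1| \equiv 0 \bmod t$ and prefix $w_2$ of $y$).

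Next I would adapt the branching programs $B_x^c$ and $B_x^w$ from Section~\ref{sec:binalphabet} to this setting. The only change is that each state additionally records the current coordinate modulo $t$, so transitions only ``commit'' to having found a witness when the alignment condition $j \equiv 0 \bmod t$ holds; this multiplies the number of states by $t = O(\log q)$, keeping the program of size $\poly(n, \log q)$. Intersecting with $A_0$ (another product construction, also of polynomial size) restricts attention to strings in $\Gamma^n$, i.e. to genuine encodings of $\Sigma^n$. Lemma~\ref{lem:countautomaton} then counts the accepting paths in $\poly$-time, giving $|G_{x,\leq n}|$. The reduction to $|G_{x,\leq p}|$ for $p \mid n$ works exactly as in Section~\ref{sec:binalphabet}: write $x = x_1 x_2 \cdots x_{n/p}$ with each $|x_i| = tp$, reduce to counting the binary encodings of $\Sigma^p$-strings whose orbit falls below $x_1$, plus a small explicit correction term of size at most $|\Orbit(x_1)|$ coming from strings equal to a rotation of $x_1$. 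Plugging these counts into Lemmas~\ref{lem: 2}--\ref{lem:4} yields $|\Classes_x|$, and Lemma~\ref{lem:reduction} converts this (via binary search for indexing, and via finding the minimal representative for reverse-indexing) into the desired $\poly(n,\log q)$-time indexing and reverse-indexing algorithms for necklaces over $\Sigma$.

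For the ``furthermore'' part, I would index necklaces of fundamental period exactly $n$ by inclusion--exclusion over periods. The number of necklaces over $\Sigma$ of fundamental period exactly $n$ is $\frac{1}{n}\sum_{d \mid n}\mu(n/d) q^d$ — these are in bijection with the aperiodic strings of length $n$ modulo rotation. To index them, observe that an orbit has fundamental period $n$ iff its size is exactly $n$ (Fact~\ref{obs:periodic1}); so I would run the same machinery but replace $\Classes_x$ with the set of orbits of size exactly $n$ that are $< x$, whose cardinality is $\frac{|G_{x,n}|}{n}$ directly from Lemma~\ref{lem: 2} restricted to the single term $i = n$. Since $|G_{x,n}|$ is computable in $\poly(n,\log q)$ time by Lemma~\ref{lem:4}, the same binary-search and minimal-representative arguments from Lemma~\ref{lem:reduction} apply verbatim, now ranging over the fundamental-period-$n$ necklaces only, so both indexing and reverse-indexing run in $\poly(n,\log q)$ time.

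The main obstacle I anticipate is getting the alignment bookkeeping in Lemma~\ref{lem:largesigma} and the modified branching programs exactly right: a witness $w \in L_x$ is a binary string whose length need not be a multiple of $t$, and the split case forces the cut between $w_1$ and $w_2$ to land on a block boundary of the encoded string, so one must be careful that the automaton only accepts when both the substring/split-substring pattern \emph{and} the mod-$t$ alignment are satisfied simultaneously — and that intersecting with $A_0$ genuinely rules out spurious binary strings not of the form $\bin(z)$ for $z \in \Sigma^n$. The rest is a routine but careful transcription of the binary-alphabet proofs.
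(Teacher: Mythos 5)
Your proposal is correct and follows essentially the same route as the paper: the block encoding $\bin$ with $t=\lceil\log q\rceil$, the alignment lemma (Lemma~\ref{lem:largesigma}), the mod-$t$ bookkeeping in the branching programs intersected with $A_0$, the divisor/M\"obius reduction, and the binary-search reduction of Lemma~\ref{lem:reduction}. Your treatment of the fundamental-period-$n$ case via $|G_{x,n}|/n$ is exactly the intended (though unstated) argument in the paper.
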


\section{Indexing irreducible polynomials}~\label{sec:irrind}
In the previous section, we saw an algorithm for indexing necklaces of length $n$ over an alphabet $\Sigma$ of size $q$, which runs in time polynomial in $n$ and $\log q$. In this section, we will see how to use this algorithm to efficiently index irreducible polynomials over a finite field.
More precisely, we will use an indexing algorithm for necklaces with fundamental period exactly equal to $n$ (which is also given
by the methods of the previous sections).

Let $q$ be a prime power, and let $\F_q$ denote the finite field of $q$ elements.
For an integer $n > 0$, let $I_{q,n}$ denote the set of monic, irreducible polynomials of degree $n$ in $\F_q[T]$.

\begin{theorem}
For every $q, n$ as above,
there is an algorithm that runs in $\poly(n, \log q)$ time,
takes $O(n \log q)$ bits of advice,
and indexes $I_{q,n}$.
\end{theorem}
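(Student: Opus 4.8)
The plan is to reduce indexing $I_{q,n}$ to indexing \emph{aperiodic} necklaces of length $n$ over the alphabet $\F_q$ (identified with $\{1,\dots,q\}$) --- those with fundamental period exactly $n$ --- for which Theorem~\ref{thm:largesigma} already supplies a $\poly(n,\log q)$-time algorithm, and then to make this reduction effective using $O(n\log q)$ bits of advice. The combinatorial core is the classical correspondence of~\cite{Golumb} between degree-$n$ irreducibles and aperiodic necklaces, which I would set up through a \emph{normal basis}.

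Include in the advice a monic irreducible $g\in\F_q[T]$ of degree $n$ (so that $\F_{q^n}\cong\F_q[T]/(g)$, with fast arithmetic), together with an element $\beta\in\F_{q^n}$ whose conjugates $\beta,\beta^q,\dots,\beta^{q^{n-1}}$ form an $\F_q$-basis of $\F_{q^n}$; such a $\beta$ exists by the normal basis theorem and costs only $O(n\log q)$ bits. Let $\phi:\F_q^n\to\F_{q^n}$ be the $\F_q$-linear isomorphism $\phi(a_0,\dots,a_{n-1})=\sum_i a_i\beta^{q^i}$. The single fact that drives everything is $\phi(\Rot(a))=\phi(a)^q$: since the $a_i$ lie in $\F_q$, raising a normal-basis expansion to the $q$-th power cyclically shifts its coordinates. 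Hence the Frobenius orbit of $\phi(a)$ equals $\phi(\Orbit(a))$, and so has size $\fp(a)$; and because the Frobenius orbit of any $\gamma\in\F_{q^n}$ has size $[\F_q(\gamma):\F_q]$, which is the degree of the minimal polynomial of $\gamma$ over $\F_q$, we conclude $\fp(a)=n$ if and only if $\mathrm{minpoly}_{\F_q}(\phi(a))\in I_{q,n}$.

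Next I would define $\Psi(a)=\mathrm{minpoly}_{\F_q}(\phi(a))$ on $\{a\in\F_q^n:\fp(a)=n\}$ and verify two things. First, $\Psi$ is constant on rotation orbits, since $\phi(\Rot(a))=\phi(a)^q$ is a conjugate of $\phi(a)$ and conjugates have the same minimal polynomial; thus $\Psi$ descends to a map $\bar\Psi$ from aperiodic necklaces of length $n$ to $I_{q,n}$. Second, $\bar\Psi$ is a bijection: if $\phi(a)$ and $\phi(a')$ are roots of the same $f\in I_{q,n}$ then --- the roots of $f$ forming a single Frobenius orbit --- $\phi(a')=\phi(a)^{q^k}=\phi(\Rot^k(a))$ for some $k$, so $a'=\Rot^k(a)$ (injectivity); and any $f\in I_{q,n}$ has a root $\gamma\in\F_{q^n}$, and $a=\phi^{-1}(\gamma)$ satisfies $\fp(a)=\deg f=n$ and $\Psi(a)=f$ (surjectivity). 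In particular $|I_{q,n}|$ equals the number of aperiodic necklaces, namely $\tfrac1n\sum_{d\mid n}\mu(d)q^{n/d}$.

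The algorithm on input $j$ is then: compute $|I_{q,n}|$ from the M\"obius formula (factoring $n$ by trial division, which is $\poly(n)$) and return \textbf{too large} if $j$ exceeds it; otherwise invoke the indexing algorithm of Theorem~\ref{thm:largesigma} for necklaces of length $n$ over $\F_q$ with fundamental period exactly $n$ to obtain a string $a\in\F_q^n$, form $\gamma=\phi(a)$ inside $\F_q[T]/(g)$ (computing the $\beta^{q^i}$ by repeated $q$-th powering), and output $\mathrm{minpoly}_{\F_q}(\gamma)$, obtained by finding via Gaussian elimination the unique monic $\F_q$-linear dependence among $1,\gamma,\dots,\gamma^n$. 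Every step runs in $\poly(n,\log q)$ time and uses $O(n\log q)$ bits of advice, and by the previous paragraph the resulting map $j\mapsto f$ is exactly $\bar\Psi$ precomposed with an indexing of aperiodic necklaces, hence an indexing of $I_{q,n}$. I do not expect a genuine obstacle beyond bookkeeping; the two points that need care are the insistence on a \emph{normal} basis (an arbitrary basis would not convert Frobenius into a coordinate rotation) and the clean equality ``Frobenius-orbit size $=$ degree of the element,'' which is precisely what matches the \emph{aperiodic} necklaces to the degree-$n$ irreducibles. Finally, the advice is essential: producing even the single irreducible $g$ deterministically in $\poly(n,\log q)$ time is an open problem.
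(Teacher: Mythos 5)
Your proposal is correct, and every step checks out: the identity $\phi(\Rot(a))=\phi(a)^q$ for a normal-basis expansion is right (the coefficients being in $\F_q$ is exactly what makes Frobenius act as a coordinate rotation), the matching of fundamental period with Frobenius-orbit size (hence with the degree of the minimal polynomial) gives a genuine bijection between aperiodic necklaces and $I_{q,n}$, and all computational steps (constructing $\F_{q^n}$ from the advised irreducible $g$, computing $\beta^{q^i}$ by repeated powering, Gaussian elimination for the minimal polynomial) run in $\poly(n,\log q)$ with $O(n\log q)$ bits of advice. However, your route differs from the paper's. The paper works multiplicatively: its advice is a \emph{primitive} polynomial, i.e.\ an irreducible whose root $g$ generates $(\F_{q^n})^*$; the necklace string is read as the base-$q$ expansion of an exponent $a\in\Z_{q^n-1}$, Frobenius becomes multiplication by $q$ on exponents (hence rotation of base-$q$ digits), and the output is $\prod_{i}(T-g^{aq^i})$. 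You work additively: any irreducible suffices to build the field, and a normal element $\beta$ makes Frobenius a literal rotation of coordinates, with the string interpreted as a coordinate vector rather than a discrete logarithm. The trade-offs are mild but real: your advice is an arbitrary irreducible plus a normal element (guaranteed by the normal basis theorem) instead of a primitive polynomial, and your construction never touches the discrete logarithm --- in your picture the obstruction to reverse-indexing $I_{q,n}$ becomes deterministic root-finding of a given irreducible in $\F_{q^n}$ rather than discrete log, which is a somewhat different (and arguably no harder) bottleneck than the one the paper cites. One small bookkeeping point: to even write down $g$ and $\beta$ you need a representation of $\F_q$ itself, which the paper includes explicitly in the advice; this costs only $O(\log q)$ additional bits and does not affect the bound.
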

\begin{proof}
To prove this theorem, we start by first describing the connection between the tasks of indexing necklaces and indexing irreducible polynomials. 
Let $P(T) \in I_{q,n}$.
Note that $P(T)$ has all its roots in the field $\F_{q^n}$.
Let $\alpha \in \F_{q^n}$ be one of the roots of $P(T)$.
Then we have that $\alpha, \alpha^q, \ldots, \alpha^{q^{n-1}}$
are all distinct, and:
$$ P(T) = \prod_{i=0}^{n-1} (T - \alpha^{q^{i}}).$$

Conversely, if we take $\alpha \in \F_{q^n}$
such that $\alpha, \alpha^q, \ldots, \alpha^{q^{n-1}}$ are all
distinct, then the polynomial 
$P(T)= \prod_{i=0}^{n-1} (T - \alpha^{q^i})$
is in $I_{q,n}$.

Define an action of $\Z_n$ on $\F_{q^n}^*$ as follows:
for $k \in \Z_n$ and $\alpha \in (\F_{q^n})^*$, define:
$$k [\alpha] = \alpha^{q^k}.$$
This action partitions $\F_{q^n}^*$ into orbits.
By the above discussion, $I_{q,n}$ is in one-to-one correspondence
with the orbits of this action with size exactly $n$.
Thus it suffices to index these orbits.

Let $g$ be a generator of the the multiplicative group $(\F_{q^n})^*$.
Define a map $E: \Z_{q^n-1} \to \F_{q^n}^*$ by:
$$ E(a) = g^a.$$
We have that $E$ is a bijection.
Via this bijection, we have an action of $\Z_n$ on $\Z_{q^n-1}$, 
where for $k \in \Z_n$ and $a \in \Z_{q^n - 1}$, 
$$ k[a] = q^k \cdot a.$$

Now represent elements of $\Z_{q^n -1}$ by integers
in $\{0, 1, \ldots, q^n - 2\}$. Define $\Sigma = \{0,1, \ldots, q-1\}$.
For $a \in \Z_{q^n-1}$, consider its base-$q$ expansion
$a_\sigma \in \Sigma^n$. This gives us a bijection between
$\Z_{q^n - 1}$ and $\Sigma^n \setminus \{(q-1, \ldots, q-1) \}$.
Via this bijection, we get an  action of $\Z_n$
on $\Sigma^n \setminus \{(q-1, \ldots, q-1) \}$.
This action is precisely the standard rotation action!

This motivates the following algorithm.\\
\noindent {\bf \underline{The Indexing Algorithm:}}\\
\noindent{\bf Input:} $q$ (a prime power), $n \geq 0$, $i \in [ |I_{q,n}| ]$\\
\noindent{\bf Advice:} 1. A description of $\F_q$\\
\noindent 2. An irreducible polynomial $F(T) \in \F_q[T]$
of degree $n$, whose root is a generator $g$ of $(\F_{q^n})^*$ (a.k.a. primitive polynomial).
\begin{enumerate}
\item Let $\Sigma = \{0,1, \ldots, q-1\}$.
\item Use $i$ to index an necklace $\sigma \in \Sigma^n \setminus \{ (q-1, q-1, \ldots, q-1) \}$ with fundamental
period exactly $n$ (via Theorem~\ref{thm:largesigma}).
\item View $\sigma$ as the base $q$ expansion of an integer $a \in \{0,1, \ldots, q^n - 2\}$.
\item Use $F(T)$ to construct the finite field $\F_{q^n}$ and the element $g \in \F_{q^n}^*$. (This can be done by setting $\F_{q^n} = \F_q[T]/F(T)$, and taking the class of the element $T$ in that quotient to be the element $g$.)
\item Set $\alpha = g^a$.
\item Set $P(T) = \prod_{i= 0}^{n-1} (T -\alpha^{q^i})$.
\item Output $P(T)$.
\end{enumerate}

For constant $q$, this algorithm can be made to work with $\poly(\log n)$ advice.
Indeed, one can construct the finite field $\F_{q^n}$ in $\poly(q, n)$ time,
and a wonderful result of Shoup~\cite{Shoup} constructs a set of 
$q^{\poly(\log n)}$ elements in $\F_{q^n}$, one of which is guaranteed to be a generator.
The advice is then the index of an element of this set which is a generator.
\end{proof}

\section{Explicit Generator Matrices and Parity Check Matrices for BCH codes}
\label{sec:BCH}
In this section, we will apply the indexing algorithm for necklaces
to give a strongly explicit construction for generator and the parity check matrices for BCH codes. 
More precisely, we use the fact that our indexing algorithm is in fact an 
unranking algorithm for the lexicgraphic ordering on (lexicographically least representatives of) necklaces.

BCH codes~\cite{MS78} are classical algebraic error-correcting codes based on polynomials over finite extension fields. They have played a central role since the early days of coding theory due to their remarkable properties (they are one of the few known families of codes that has better rate/distance tradeoff than random codes in some regimes). Furthermore, their study  motivated many advances in algebraic algorithms.

Using our indexing algorithm for necklaces, we can answer a basic
question about BCH codes: we construct strongly explicit explicit generator matrices and parity check matrices for BCH codes. For the traditionally used setting of parameters
(constant designed distance), it is trivial to construct generator matrices and parity check matrices for BCH codes. But for large values of the designed
distance, as far as we are aware, this problem was unsolved.

Let $q$ be a prime power, and let $n\geq 1$ and $0 \leq d < q^n-1$.
The BCH code associated with these parameters will be of length $q^n$ over the
field $\F_q$, where the $q^n$ coordinates are identified with the big field $\F_{q^n}$.
Let:
$$ V = \{ \langle P(\alpha) \rangle_{\alpha \in \F_{q^n}} \mid P(X) \in \F_{q^n}[X], \deg(P) \leq d, \mbox{ s.t. } \forall \alpha \in \F_{q^n}, P(\alpha) \in \F_q \}.$$
In words: this is the $\F_q$-linear space of all $\F_{q^n}$-evaluations of $\F_{q^n}$-polynomials of low degree, which have the property that all their evaluations lie in $\F_q$. In coding theory terminology, this is a subfield subcode of Reed-Solomon codes.

The condition that $P(\alpha) \in \F_q$ for each $\alpha \in \F_{q^n}$
can be expressed as follows:
$$ P(X)^q = P(X)  \mod X^{q^n} - X.$$
Thus, if $P(X) = \sum_{i=0}^d a_i X^i$,
then the above condition is equivalent to:
$$ \sum_{i=0}^d a_i^q X^{iq} = \sum_{i=0}^d a_i X^i \mod X^{q^n} - X,$$
which simplifies to:
$$ \forall i, a_{iq \mod (q^n - 1)} = a_i^q.$$
Thus:
\begin{enumerate}
\item For every $i$, if $\ell$ is the smallest integer such that $iq^\ell \mod (q^n-1) = i$, then $a_i \in V_\ell = \{ \alpha \in \F_{q^n} \mid \alpha^{q^\ell} = \alpha \}$,
\item Specifying $a_i \in V_\ell$ automatically determines $a_{iq  \mod (q^n - 1) }, a_{iq^2 \mod (q^n-1) }, \ldots $,
\item $a_i$ can take any value in $V_\ell$.
\end{enumerate}

This motivates the following choice of basis for BCH codes.
Let $\mathcal F = \{ S \subseteq \{0,1,\ldots, d\} \mid  i \in S \Rightarrow (iq \mod (q^n-1)) \in S \}.$
Let $\alpha_{S,1}, \ldots, \alpha_{S, |S|}$ be a basis for $V_{|S|}$ over $\F_q$
(note that when $j \mid n$,
we have that $V_\ell = \{ \alpha \in \F_{q^n} \mid \alpha^{q^\ell} = \alpha \}$
is an $\F_q$-linear subspace of $\F_{q^n}$ of dimension $\ell$).
For $S \in \mathcal F$, define $m_S = \min_{i \in S} i$.
For $S \in \mathcal F$ and $j \in [|S|]$,
define:
$$ P_{S, j} (X) = \sum_{k = 0}^{|S|-1} \alpha_j^{q^k} X^{m_S q^{k} \mod (q^n-1)}.$$
It is easy to see from the above description that $\left( P_{S, j} \right)_{S \in \mathcal F, j \in [n]}$
forms an $\F_q$ basis for the BCH code $V$. Thus it remains to show that one can index the sets of $\mathcal F$.

If we write all the elements of $S \in \mathcal F$ in base $q$, 
we soon realize that the $S$ are precisely in one-to-one correspondence with those
rotation orbits of $\Sigma^n$ (with $\Sigma = \{0,1, \ldots, q-1\}$) where all elements
of the orbit are lexicographically $\leq$ some fixed string in $\Sigma^n$ (in this case the fixed string
turns out to be the base $q$ representation of the integer $d$). By our indexing algorithm for orbits,
$\mathcal F$ can be indexed efficiently. Thus we can compute any given entry of a generator matrix for BCH codes.

The parity check matrices can be constructed similarly.
For a given designed distance $d$, one starts with 
$d \times \F_{q^n}^*$ matrix $M$ whose $i, \alpha$
entry equals $\alpha^i$. Note that every $d$ columns of this matrix
form a van der Monde matrix: thus they are linearly independent over
$\F_{q^n}$ (and hence also over $\F_q$). 

Define an equivalence $\sim$ relation on $[d]$ as follows:
$i_1 \sim i_2$ iff $i_2 = i_1 \cdot q^k \mod (q^n-1)$ for some $k$.
Now amongst the rows of $M$, for each equivalance class $E \subseteq d$,
keep only one row from $E$ (i.e., for some $i \in E$, keep the $i$'th
row of $M$ and delete the $j$'th row for all $j \in E \setminus \{i\}$).
The remarkable dimension-distance tradeoff of BCH codes is based
on the fact that this operation, while it reduces the dimension 
of the ambient space in which the columns of this matrix lie,
preserves the property that every $d$ columns of this matrix are linearly
independent over the small field $\F_q$.
This reduced matrix $\tilde{M}$ is the parity-check matrix of the BCH 
code.

We now give a direct construction of the parity-check
matrix $\tilde{M}$. Let $\mathcal F = \{ S \subseteq [q^n-1] \mid
i \in S \implies iq \in S \}$. For $S \in \mathcal F$,
let $m_S = \min_{i \in S} i$. Then the rows of $\tilde{M}$
are indexed by those $S \in \mathcal F$ for which
$m_S \leq d$. The $(S, \alpha)$ entry of $\tilde{M}$
equals $\alpha^{m_S}$. Writing all the integers of $[q^n-1]$
in base $q$, we see that the elements of $\mathcal F$ are
orbits of the $\Z_n$ action on $\Sigma^n$, where $\Sigma
= \{0, 1, \ldots, q-1\}$. Furthermore, the $S$ with $m_S \leq d$
are precisely those orbits which have some element lexicographically
at most a given fixed element $x$ (which in this case is the 
base $q$ representation of $d$). By our indexing algorithm,
the rows of $\tilde{M}$ can be indexed efficiently, and hence
each entry of the $\tilde{M}$ can be computed in time
$\poly(n)$, as desired.


\section{Open Problems}\label{sec:openprobs}

We conclude with some open problems.

\begin{enumerate}

\item Can the orbits of group actions be indexed in general?

One formulation of this problem is as follows: Let $G$ be a finite group acting on a set $X$, both of size $\poly(n)$.
Suppose $G$ and its action on $X$ are given as input explicitly.
For a finite alphabet $\Sigma$, consider the action of $G$ on $\Sigma^X$ (by
permuting coordinates according to the action on $X$). Can the orbits of this action
be indexed? Can they be reverse-indexed?

\item Let $G$ be the symmetric group $S_n$. Consider its action on $\{0,1\}^{{[n] \choose 2}}$,
where $G$ acts by permuting coordinates. The orbits of this action correspond to the isomorphism classes of $n$-vertex graphs. Can these orbits be indexed?

More ambitiously, can these orbits be reverse-indexed? This would imply that graph isomorphism is in $P$.

\item It would be interesting to explore the complexity theory
of indexing and reverse-indexing. Which languages can be indexed efficiently?
Can this be characterized in terms of known complexity classes?

In particular, it would be nice to disprove the conjecture:
``Every pair-language $L \in P$ for which the counting problem can be solved efficiently can be efficiently indexed".

\end{enumerate}

\section*{Acknowledgements}
We would like to thank Joe Sawada for making us aware of the work of Kociumaka et al~\cite{KRR14}.

\bibliographystyle{alpha}
\bibliography{refs}
\appendix
\section{Alternative indexing algorithm for binary necklaces of prime length}
\label{sec:magic}

In this section we give another algorithm for indexing necklaces in $\{0,1\}^n$ 
in the special case where $n$ is prime.

For convenience, we will denote the $n$ coordinates of $\{0,1\}^n$
by $0,1, \ldots, n-1$, and identify them with elements of $\Z_n$.
\begin{definition}
Let $x \in \{0,1\}^n$. We say $x$ is top-heavy  if for every
$j$, $0 \leq j < n$:
$$ \sum_{k = 0}^j \left( x_k - \frac{wt(x)}{n} \right) \geq 0.$$
\end{definition}
In words: every prefix of $x$ has normalized Hamming weight at
least as large as the normalized Hamming weight of $x$.

The next lemma by Dvoretzky and Motzkin~\cite{DM47}  shows that every string has a unique top-heavy rotation.
\begin{lem}[\cite{DM47}]
Let $n$ be prime.
For each $x \in \{0,1\}^n \setminus \{0^n, 1^n\}$, there exists a unique $i$, $0\leq i < n$ such that 
$\Rot^i(x)$ is top-heavy.
\end{lem}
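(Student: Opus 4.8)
The plan is to recognize this as an instance of the Dvoretzky--Motzkin cycle lemma and prove it by translating top-heaviness into a statement about where a prefix-sum sequence attains its minimum. First I would clear denominators: fix $x \in \{0,1\}^n \setminus \{0^n,1^n\}$, write $w = \wt(x)$ so that $0 < w < n$, and for $k \in \Z_n$ set $c_k = n x_k - w \in \{-w,\, n-w\}$. Then $\sum_{k=0}^{n-1} c_k = nw - nw = 0$, and $\Rot^i(x)$ is top-heavy precisely when every ``forward partial sum'' $\sum_{m=0}^{j} c_{i+m}$ (indices taken mod $n$), for $0 \le j < n$, is nonnegative. (Here I would fix once and for all the rotation-direction convention so that the coordinate sequence of $\Rot^i(x)$ is $x_i, x_{i+1}, \dots, x_{i+n-1}$.)

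Next I would introduce the prefix-sum sequence. Define $P_\ell = \sum_{k=0}^{\ell-1} c_{k \bmod n}$ for all integers $\ell \ge 0$, with $P_0 = 0$; since the sum over a full period vanishes, $P$ is periodic with period $n$. The length-$\ell$ partial sum of the rotation starting at $i$ equals $P_{i+\ell} - P_i$, so $\Rot^i(x)$ is top-heavy if and only if $P_i \le P_{i+\ell}$ for all $\ell = 1, \dots, n$. As $\ell$ runs over $1, \dots, n$, the index $i+\ell$ runs over $n$ consecutive integers, which by periodicity realize exactly the multiset $\{P_0, \dots, P_{n-1}\}$; in particular $P_{i+n} = P_i$ so the $\ell = n$ condition is vacuous. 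Hence
$$\Rot^i(x) \text{ is top-heavy} \iff P_i = \min\{P_0, P_1, \dots, P_{n-1}\}.$$

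Finally I would handle existence and uniqueness. Existence is immediate: the minimum over the finite set $\{P_0, \dots, P_{n-1}\}$ is attained at some index, which yields a top-heavy rotation. For uniqueness, suppose the minimum is attained at two indices $0 \le i_1 < i_2 \le n-1$. Then $P_{i_1} = P_{i_2}$, so the block $c_{i_1}, c_{i_1+1}, \dots, c_{i_2 - 1}$ of length $d := i_2 - i_1 \in \{1, \dots, n-1\}$ sums to $0$. If $a$ of these $d$ entries equal $n-w$ and the remaining $d-a$ equal $-w$, this forces $a(n-w) - (d-a)w = 0$, i.e. $an = dw$. Since $n$ is prime and $1 \le d \le n-1$, we have $n \nmid d$, hence $n \mid w$, contradicting $0 < w < n$. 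Therefore the minimum of $P$ over a period is attained at a unique index, giving a unique top-heavy rotation.

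The argument is largely routine once the translation to prefix sums is in place; the one genuinely load-bearing point is the use of primality of $n$ in the uniqueness step (together with keeping the rotation/index bookkeeping consistent), which is exactly where this statement fails for composite $n$ such as the string $0101$.
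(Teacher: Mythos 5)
Your proposal is correct and follows essentially the same route as the paper: both characterize the top-heavy rotations as exactly those starting at a minimizer of the (periodic) prefix-sum sequence, get existence from the minimum being attained, and get uniqueness from the fact that a proper block summing to zero would force $n \mid d\cdot\wt(x)$, impossible for prime $n$ with $0<\wt(x)<n$. Your clearing of denominators ($c_k = n x_k - \wt(x)$) is only a cosmetic variant of the paper's fractional $f(x,j)$, and your explicit handling of the rotation-direction convention is harmless since the set of rotations is the same either way.
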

\begin{proof}
Define $f: \{0,1 \}^n  \times \mathbb{N} \to \mathbb R$ by:
$$f(x, j) = \sum_{k = 0}^j \left(x_{k \mod n} - \frac{\wt(x)}{n} \right).$$
Then the top-heaviness of $x$ is equivalent to $f(x, j) \geq 0$ for all
$j \in \mathbb N$.

We make two observations:
\begin{enumerate}
\item If $j = j' \mod n$, then  $f(x, j) = f(x, j')$.
This follows from the fact that:
$$\sum_{k = 0}^{n-1} \left(x_{k} - \frac{\wt(x)}{n} \right) = 0.$$
\item For nonnegative integers $j, \ell$ with $j < n$, we have:
$$f(\Rot^j(x), \ell) = f(x, j + \ell) - f(x, j).$$
\end{enumerate}

Putting these two facts together, we get that:
\begin{align}
\label{eqtop}
f(\Rot^j(x), \ell) = f(x, (j+\ell) \mod n) - f(x, j).
\end{align}

Now fix $x \in \{0,1\}^n \setminus \{0^n, 1^n\}$.
Define $i \in \{0,1, \ldots, n-1\}$ to be such that
$f(x, i)$ is minimized.
By Equation~\eqref{eqtop}, we get that
$f(\Rot^i(x), \ell) \geq 0$ for all nonnegative integers $\ell$.
This proves the existence of $i$.

For uniqueness of $i$, we make two more observations:
\begin{enumerate}
\item If $f(x, j) > f(x, i)$, then
$$f(\Rot^j(x), n + i-j) = f(x, n+i) - f(x,j) = f(x, i) - f(x,j) < 0,$$
and thus $\Rot^j(x)$ is not top-heavy.
 
\item If $f(x, j) = f(x,j')$, then $j = j' \mod n$.
To see this, first note that we may assume $j < j'$.
Then:
\begin{align*}
0 &= f(x, j') - f(x,j)\\
&= \sum_{k = j+1}^{j'} \left(x_{k \mod n} - \frac{\wt(x)}{n} \right)\\
&=  \left(\sum_{k = j+1}^{j'} x_{k \mod n} \right)  - (j'-j) \cdot\frac{\wt(x)}{n}.
\end{align*}
Thus, since the first term is an integer, we must have that $(j' - j) \cdot \wt(x)$ must
be divisible by $n$, and by our hypothesis on $x$, we have that $j' = j \mod n$.
\end{enumerate}
Thus $i \in \{0,1, \ldots, n-1\}$, for which $\Rot^i(x)$ is top-heavy, is unique.
\end{proof}

The above lemma implies that each orbit $E$ contains a unique top-heavy string.
We define the canonical element of $E$ to be that element.

We now show that there is a branching program $A$ such that
$L(A) \cap \{0, 1\}^n$  precisely equals the set of top-heavy strings.
By the discussion in the introduction, this immediately gives an indexing algorithm for orbits of $E$.

How does a branching program verify top-heaviness? 
In parallel, for each $\ell \in \{ 1, \ldots, n-1\}$, the branching program
checks if condition $C_\ell$ holds, where $C_\ell$ is:
$$`` \forall 0 \leq j < n,  \sum_{k = 0}^j x_k \geq \frac{k \cdot \ell}{n} " .$$
At the same time, it also computes the weight of $x$.
At the final state, it checks if $C_{\wt(x)}$ is true.
$x$ is top-heavy if and only if it is true.

This completes the description of the indexing algorithm.

We also know an extension of this approach that can handle $n$ which have $O(1)$ prime factors.
The key additional ingredient of this extension is a new encoding of strings
that enables verification of properties like top-heaviness by automata.

\section{Complexity of  indexing}
\label{sec:complexity}

In this section, we explore some basic questions about the complexity theory of indexing and reverse indexing. We would like to understand what sets can be indexed/reverse-indexed efficiently.

The outline of this section is as follows.
We first deal with indexing and reverse-indexing in a nonuniform setting.
Based on some simple observations about 
what cannot be indexed/reverse-indexed, we make some naive, optimistic conjectures characterizing what is efficiently indexable/reverse-indexable, and then proceed to disprove these conjectures. We then make some natural definitions for indexing and reverse-indexing in a uniform setting, and conclude with some analogous naive, optimistic conjectures. 

\subsection{Indexing and reverse-indexing in the nonuniform setting}

By simple counting, most sets $S \subseteq \{0,1\}^n$ cannot be indexed or reverse-indexed by circuits of size $\poly(n)$.
We now make two naive and optimistic conjectures:
\begin{itemize}
\item If $S \subseteq \{0,1\}^n$ has a $\poly(n)$-size circuit recognizing it, then
there is a $\poly(n)$-size circuit for indexing $S$.
\item If $S \subseteq \{0,1\}^n$ has a $\poly(n)$-size circuit recognizing it, then
there is a $\poly(n)$-size circuit for reverse-indexing $S$.
\end{itemize}
Note that the simple observations about indexing made in the introduction are consistent with these conjectures.

We now show that these conjectures are false (unless the polynomial hierarchy collapses).
Assuming the conjectures, we will give $\Sigma_4$ algorithms
to count the number of satisfying assignments of a given boolean formula $\phi$.
By Toda's theorem~\cite{Toda}, this would imply that the polynomial hierarchy collapses.

Let $S \subseteq \{0,1\}^n$ be the set of satisfying assignments of a given boolean
formula $\phi$ of size $m$ ($m \geq n$). We know that $S$ can be recognized by a circuit of size $m$ (namely $\phi$).
By the conjectures, there are circuits $C_i$ and $C_r$ of size $\poly(m)$ for indexing $S$ and reverse-indexing 
$S$. We will now see that a $\Sigma_4$ algorithm can get its hands on these circuits, and then
use these circuits to count the number of elements in $S$.

\paragraph{Indexing} Consider the $\Sigma_4$ algorithm that does the following on input $\phi$.
Guess a circuit $C:\{0,1\}^n \to \{0,1\}^n \cup \{``\mbox{{\bf too large}}"\}$ of size $\poly(m)$, and an integer $K < 2^n$
and then verify the following properties:
\begin{itemize}
\item for all $i \in [K]$, $C(i) \neq$ {\bf too large} and $\phi(C(i)) = 1$.
\item for all $i \notin [K]$, $C(i) =$ {\bf too large}.
\item for all $x \in \{0,1\}^n$, if $\phi(x) = 1$, then there exists a unique $i \in [K]$
for which $C(i) = x$.
\end{itemize}
If $C = C_i$, and $K = |S|$, then these properties hold.
It is also easy to see that if all these properties hold, then $C$ is an indexing circuit for
$S$, and $K = |S|$. Thus the above gives a $\Sigma_4$ algorithm to compute $|S|$.

\paragraph{Reverse-indexing} Consider the $\Sigma_4$ algorithm that does the following on input $\phi$.
Guess a circuit $C:\{0,1\}^n \to \{0,1\}^n \cup \{``\mbox{{\bf false}}"\}$ of size $\poly(m)$, and an integer $K < 2^n$
and then verify the following properties:
\begin{itemize}
\item for all $x \in \{0,1\}^n$, either ($\phi(x) = 1$ and $C(x) \in [K]$) or ($\phi(x) = 0$ and $C(x)= $ {\bf false}).
\item for all $i \in [K]$, there exists a unique $x \in \{0,1\}^n$ such that $C(x) = i$.
\end{itemize}
If $C = C_r$, and $K = |S|$, then these properties hold.
It is also easy to see that if all these properties hold, then $C$ is a reverse-indexing circuit for
$S$, and $K = |S|$. Thus the above gives a $\Sigma_4$ algorithm to compute $|S|$.

\subsection{Indexing and reverse-indexing in the uniform setting}

We now introduce a natural framework for talking about indexing in the uniform setting.

Let $L \subseteq \Sigma^* \times \Sigma^*$ be a pair-language. For $x \in \Sigma^*$, define
$L_x = \{ y \mid (x,y) \in L \}$. An algorithm $M(x,i)$ is said to be an indexing algorithm for
$L$ if for every $x \in \Sigma^*$, the function $M(x, \cdot)$ is an indexing of the set $L_x$.
An algorithm $M(x,y)$ is said to be a reverse indexing algorithm for $L$ if for every $x \in \Sigma^*$,
the function $M(x, \cdot)$ is a reverse indexing of the set $L_x$. Indexing/reverse-indexing algorithms
are said to be efficient if they run in time $\poly(|x|)$.

We now make some preliminary observations about the limitations of efficient indexing/reverse-indexing.
\begin{enumerate}
\item If $L$ can be efficiently indexed, then the counting problem for $L$ can be solved efficiently
(recall that the counting problem for $L$ is the problem of determining $|L_x|$ when given $x$ as input.
The counting problem can be solved via binary search using an indexing algorithm).
\item If $L$ can be efficiently reverse indexed, then $L$ must be in $P$. Indeed,
the reverse indexing algorithm $M(x,y)$ immediately tells us whether $(x,y) \in L$.
\end{enumerate}

In the absence of any other easy observations, we gleefully made the following optimistic conjectures.
\begin{enumerate}
\item Every pair-language $L \in P$ for which the counting problem can be solved efficiently
can be efficiently indexed.
\item Every pair-language $L \in P$ can be efficiently reverse indexed.
\end{enumerate}
Using ideas similar to those used in the nonuniform case, one can show that the latter of these conjectures is
not true (unless the polynomial hierarchy collapses). However we have been unable to say anything interesting about the first conjecture,
and we leave the conjecture that it is false as an open problem.

\end{document}